\documentclass[10pt,journal,epsfig]{IEEEtran}

\usepackage[dvips]{graphicx}
\usepackage{graphicx}
\usepackage{amssymb}
\usepackage{cite}
\usepackage{subfigure}
\usepackage{amsmath}

\usepackage{algorithm}
\usepackage{algorithmic}

\begin{document}

%Sparse Signal Recovery From Phaseless Measurements

\title{Fast Beam Alignment for Millimeter Wave Communications: A Sparse Encoding and Phaseless Decoding Approach}

\author{Xingjian Li, Jun Fang, Huiping Duan, Zhi Chen, and Hongbin Li, ~\IEEEmembership{Senior Member,~IEEE}
\thanks{Xingjian Li, Jun Fang and Zhi Chen are with the National Key Laboratory
of Science and Technology on Communications, University of
Electronic Science and Technology of China, Chengdu 611731, China,
Email: JunFang@uestc.edu.cn}
\thanks{Huiping Duan is with the School of Electronic Engineering,
University of Electronic Science and Technology of China, Chengdu
611731, China, Email: huipingduan@uestc.edu.cn}
\thanks{Hongbin Li is with the Department of Electrical and Computer Engineering,
Stevens Institute of Technology, Hoboken, NJ 07030, USA, E-mail:
Hongbin.Li@stevens.edu}
\thanks{This work was supported in part by the National Science
Foundation of China under Grant 61522104.}}

\maketitle

%while the receiver (i.e. user) has an omni-directional antenna
%that receives in all directions

\begin{abstract}
In this paper, we studied the problem of beam alignment for
millimeter wave (mmWave) communications, in which we assume a
hybrid analog and digital beamforming structure is employed at the
transmitter (i.e. base station), and an omni-directional antenna
or an antenna array is used at the receiver (i.e. user). By
exploiting the sparse scattering nature of mmWave channels, the
beam alignment problem is formulated as a sparse encoding and
phaseless decoding problem. More specifically, the problem of
interest involves finding a sparse sensing matrix and an efficient
recovery algorithm to recover the support and magnitude of the
sparse signal from compressive phaseless measurements. A sparse
bipartite graph coding (SBG-Coding) algorithm is developed for
sparse encoding and phaseless decoding. Our theoretical analysis
shows that, in the noiseless case, our proposed algorithm can
perfectly recover the support and magnitude of the sparse signal
with probability exceeding a pre-specified value from
$\mathcal{O}(K^2)$ measurements, where $K$ is the number of
nonzero entries of the sparse signal. The proposed algorithm has a
simple decoding procedure which is computationally efficient and
noise-robust. Simulation results show that our proposed method
renders a reliable beam alignment in the low and moderate
signal-to-noise ratio (SNR) regimes and presents a clear
performance advantage over existing methods.
\end{abstract}

\begin{keywords}
Millimeter wave (mmWave) communications, beam alignment, sparse
encoding and phaseless decoding.
\end{keywords}

%which presents a key challenge for mmWave communication

%GhoshThomas14,

\section{Introduction}
Millimeter wave (mmWave) communication is a promising technology
for future cellular networks
\cite{RappaportMurdock11,RanganRappaport14,ChenZhao14,ChenSun16}.
It has the potential to offer gigabits-per-second communication
data rates by exploiting the large bandwidth available at mmWave
frequencies. Nevertheless, communication at the mmWave frequency
bands suffers from high attenuation and signal absorption
\cite{SwindlehurstAyanoglu14}. To address this issue, large
antenna arrays should be used to provide sufficient beamforming
gain for mmWave communications \cite{AlkhateebMo14}. In fact,
thanks to the small wavelength at the mmWave frequencies, the
antenna size is very small and thus a large number of array
elements can be packed into a small area, which makes the use of
large antenna arrays a feasible option for mmWave communications.

%let the transmitter and the receiver scan the entire space, and

On the other hand, although directional beamforming helps
compensate for the significant path loss incurred by mmWave
signals, it comes up with a complicated beamforming training
procedure because, due to the narrow beam of the antenna array,
communication between the transmitter and the receiver is possible
only when the transmitter's and receiver's beams are well-aligned,
i.e. the beam directions are pointing towards each other.
Therefore, beamforming training is required to find the best
beamformer-combiner pair that gives the highest beamforming gain
\cite{LiuLi18}. A natural approach to perform beamforming training
is to exhaustively search for all possible beam pairs to identify
the best beam alignment, which requires the receiver to scan the
entire space for each choice of beam direction on the transmitter
side. This exhaustive search has a sample complexity of
$\mathcal{O}(N^2)$ ($N$ denotes the number of possible beam
directions) and usually takes a long time (up to several seconds)
to converge, particularly when the number of antennas at the
transmitter and the receiver is large \cite{AbariHassanieh16}.

%Such a long delay for beam alignment may be fine for fixed
%point-to-point links, but definitely unacceptable for cellular
%networks, in which a base station has to quickly switch between
%users and accommodate mobile clients

%prohibitively

%beamforming training process into two separate stages

%First, the transmitter employs a quasi-omnidirectional beam
%pattern, and let the receiver scan the space for the best signal
%direction.

%with the transmitter scanning the space and the receiver using a
%quasi-omnidirectional beam pattern

%\cite{IEEEStandard12}

%It, however, still requires to sequentially scan the entire space
%for both the transmitter and the receiver, and incurs a
%considerable delay (hundreds of milliseconds) in practice
%\cite{SurVenkateswaran15}.

%but coming at the cost of worse signal-to-noise ratio (SNR)

To address this issue, many efforts have been made to reduce the
time required for beamforming training. Specifically, the IEEE
802.11ad standard proposed to conduct an exhaustive search at the
receiver, with the transmitter adopting a quasi-omnidirectional
beam pattern. This process is then reversed to have the
transmitter sequentially scan the entire space while the receiver
uses a quasi-omnidirectional beam shape. This protocol reduces
sample complexity from $\mathcal{O}(N^2)$ to $\mathcal{O}(N)$. To
further reduce the training time, adaptive beam alignment
algorithms, e.g.
\cite{SongChoi15,HurKim13,AlkhateebAyach14,NohZoltowski17,HussainMichelusi17},
were proposed. In these works, a hierarchical multi-resolution
beamforming codebook set is employed to avoid the costly
exhaustive sampling of all pairs of transmit and receive beams.
The basic idea is to use coarse codebooks to first identify the
range of the beam direction, and then use high-resolution
subcodebooks to find a finer beam direction. This adaptive beam
alignment requires to adaptively choose a subcodebook at each
stage based on the output of earlier stages, which requires
feedback from the receiver to the transmitter and may not be
available at the initial channel acquisition stage.

%feedback from the receiver to the transmitter and may not be
%feasible in practice.

%whose objective is to search for the best beam pair

In addition to the above beam steering techniques, another
approach
\cite{AlkhateebyLeus15,SchniterSayeed14,KimLove15,GaoZhang15,
MarziRamasamy16,GaoDai15b,GaoDai16,ZhouFang16,ZhouFang17,LiFang18}
directly estimates the mmWave channel or its associated
parameters, e.g. angles of arrival/departure, without the need of
scanning the entire space. The rationale behind this class of
methods is to exploit the sparse scattering nature of mmWave
channels and formulate the channel estimation into a compressed
sensing problem. Although having the potential to substantially
reduce the training overhead, this compressed sensing-based
approach suffers from several drawbacks. Firstly, compressed
sensing methods usually involve a computational complexity that
might be too excessive for practical systems. Secondly, compressed
sensing methods require the knowledge of the phase of the
measurements. While in mmWave communications, due to the carrier
frequency offset (CFO) caused by high-frequency hardware
imperfections, the phase of the measurements might be corrupted by
a random noise that varies across time, and as a result, only the
magnitude information of the measurements is useful for beam
alignment. Lastly, for compressed sensing methods, the
beamforming/combining vectors have to be chosen to be random
vectors to satisfy the restricted isometry property. This,
however, comes at the cost of worse signal-to-noise ratio (SNR)
and reduced transmission range. Recently, a novel beam steering
scheme called as ``Agile-Link''
\cite{AbariHassanieh16,HassaniehAbari18} was proposed to find the
correct beam alignment. The proposed algorithm only uses the
magnitude information of the measurements for recovery of the
signal directions and achieves a sample complexity of
$\mathcal{O}(K\log N)$, where $K$ denotes the number of signal
paths.

%In \cite{AbariHassanieh16,HassaniehAbari18}, a novel beam steering
%scheme called as ``Agile-Link''  was proposed to find the correct
%beam alignment. This algorithm studies the case where the signal
%incurs a random phase due to the Carrier Frequency Offset (CFO)
%between the transmitter and receiver, and it uses only magnitudes
%of the measurements to recover the directions of the paths. The
%main idea of the Agile-Link is to harsh the beam directions using
%a few carefully chosen hash functions, and steer the antenna array
%to beam along multiple directions simultaneously. After recovering
%the directions of all paths, the direction with the largest power
%can be chosen as the best beam direction.

%rely on a random sensing matrix assumption

%nonlinear reconstruction of a high-dimensional sparse signal have

%may not be available for mmWave communications due to the carrier
%frequency offset caused by high-frequency hardware imperfections.

%In mmWave communications, due to high-frequency hardware
%imperfections, carrier frequency offset is a factor that cannot be
%neglected. In particular, due to the carrier frequency offset
%between the transmitter and the receiver, the phase of the
%measurements $u(t)$ will be corrupted by a random unknown phase
%that varies across time. In this case, we can only rely on the
%magnitude information of $u(t), t=1,\ldots,T$ to recover the
%directions of signal paths.

%We assume hybrid analog and digital beamforming structures to
%offer a compromise between hardware complexity and system
%performance.

In this paper, we continue the efforts towards developing a fast
and efficient beam alignment scheme for mmWave communications.
Similar to \cite{AbariHassanieh16,HassaniehAbari18}, we rely on
the magnitude information of the measurements for beam steering.
By exploiting the sparse scattering nature of mmWave channels, we
show that the beam alignment problem can be formulated as a sparse
encoding and phaseless decoding problem. More specifically, the
problem of interest is to devise a sparse sensing matrix
$\boldsymbol{A}$ (referred to as sparse encoding) and develop a
fast and efficient recovery algorithm (referred to as phaseless
decoding) to recover the support and magnitude information of the
sparse signal $\boldsymbol{x}$ from compressive phaseless
measurements:
\begin{align}
  \boldsymbol{y} = \left|\boldsymbol{A}\boldsymbol{x}\right|
  \label{observation-model}
\end{align}
Note that the estimation of sparse signals from compressive
phaseless measurements, termed as ``compressive phase retrieval
(CPR)'', has been extensively studied over the past few years,
e.g.
\cite{MoravecRomberg07,OhlssonYang12,ShechtmanBeck14,BahmaniRomberg15}.
Nevertheless, there are two important distinctions between our
problem and the standard CPR problem. First, standard CPR assumes
a random measurement matrix which satisfies the restricted
isometry property. For our problem, the measurement matrix which
determines the shape of the beam pattern cannot be designed
freely. In fact, to provide a sufficient beamforming gain for
signal reception, we need to impose a sparse structure on the
measurement matrix. Second, standard CPR aims to retrieve the
complete information of the sparse signal $\boldsymbol{x}$, while
for the beam alignment purpose, only partial information of
$\boldsymbol{x}$, i.e. the support and the magnitude information
of those nonzero entries, needs to recovered.

%which makes PhaseCode unsuitable for practical beam alignment
%problems

%This theoretical result suggests that the sample complexity
%required for beam alignment can be significantly reduced to as low
%as $\mathcal{O}(K)$.

To our best knowledge, in existing literature, PhaseCode proposed
in \cite{PedarsaniYin17} is a CPR algorithm that is most relevant
to our sparse encoding and phaseless decoding problem, in which
its measurement matrix is devised based on a sparse-graph coding
framework. It was shown that PhaseCode can recover a $K$-sparse
signal using slightly more than $4K$ measurements with high
probability. Nevertheless, PhaseCode (even its robust version)
involves a delicate decoding procedure sensitive to noise and
measurement errors, and suffers from severe performance
degradation in the presence of noise. To overcome this difficulty,
in this work, we propose a sparse bipartite graph code (SBG-Code)
algorithm for sparse encoding and phaseless decoding. Different
from PhaseCode, our proposed method uses a set of sparse bipartite
graphs, instead of a single bipartite graph, to encode the sparse
signal. The proposed algorithm involves a simple decoding
procedure which has a minimum computational complexity and is
robust against noise. Also, it can recover the support and
magnitude information of a $K$-sparse signal with a sample
complexity of $\mathcal{O}(K^2)$, thus providing a competitive
solution for practical mmWave beam alignment systems.

The rest of the paper is organized as follows. In Section
\ref{sec:system-model}, the system model is discussed and the beam
alignment is formulated into a sparse encoding and phaseless
decoding problem. In Section \ref{sec:review}, an overview of
PhaseCode is provided. A SBG-Code method is developed in Section
\ref{sec:SBG-Code}, along with its theoretical analysis provided
in \ref{sec:theoretical-analysis}. The robust version of SBG-Code
is studied in \ref{sec:robust-SBG-Code} and the extension to
antenna array receiver is discussed in Section
\ref{sec:extension}. Simulation results are provided in Section
\ref{sec:simulation-results}, followed by concluding remarks in
Section \ref{sec:conclusions}.

\begin{figure}[!t]
\centering
\includegraphics[width=10cm]{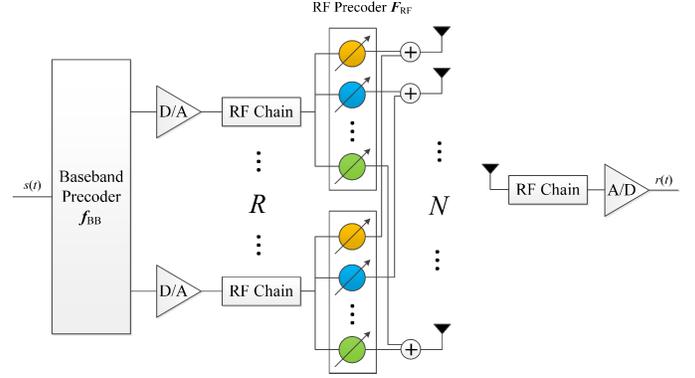}
\caption{The transmitter has a hybrid beamforming structure, and
the receiver uses an omni-directional antenna.} \label{fig6}
\end{figure}

%In practice, the true parameters may not lie on the discretized
%grid. In the case, $K$ will be slightly larger than $P$ due to the
%power leakage.

%such as the angle-of-arrival (AoA), angle-of-departure (AoD) and
%the channel gain (in terms of magnitude) of each path

\section{System Model} \label{sec:system-model}
Consider a mmWave communication system which consists of a
transmitter (base station) and a receiver (user). We assume that a
hybrid analog and digital beamforming structure is employed at the
transmitter, while the receiver has an omni-directional antenna
that receives in all directions (see Fig. \ref{fig6}). The
extension to an antenna array receiver will be discussed in
Section \ref{sec:extension}. The transmitter is equipped with $N$
antennas and $R$ RF chains. Since the RF chain is expensive and
power consuming, we have $R\ll N$. Note that although a single
receiver is considered in our paper, the extension of our scheme
to the multi-user scenario is straightforward, in which case the
base station periodically broadcasts a common codeword that is
decoded by each user to extract its associated channel information
\cite{SongHaghighatshoar18}. Each user then sends the index of the
beam corresponding to the selected angle-of-departure (AoD) to the
base station via a random access control channel. A connection
between the base station and the user is established after the
user receives a response from the base station.

The mmWave channel is characterized by a geometric channel model
\cite{AlkhateebAyach14}
\begin{align}
\boldsymbol{h} =
\sum_{p=1}^{P}\alpha_{p}\boldsymbol{a}_{t}(\theta_{p})
\label{channel-model}
\end{align}
where $P$ is the number of paths, $\alpha_{p}$ is the complex gain
associated with the $p$th path, $\theta_{p}\in[0,2\pi]$ is the
associated azimuth angle of departure (AoD), and
$\boldsymbol{a}_{t}\in\mathbb{C}^{N}$ is the transmitter array
response vector. Suppose a uniform linear array (ULA) is used.
Then the steering vector at the transmitter can be written as
\begin{align}
\boldsymbol{a}_{t}(\theta_{p})=\frac{1}{\sqrt{N}}
\left[1,e^{j\frac{2\pi}{\lambda}d\sin(\theta_{p})},\ldots,e^{j(N-1)
\frac{2\pi}{\lambda}d\sin(\theta_{p})}\right]^T
\end{align}
where $\lambda$ is the signal wavelength, and $d$ is the distance
between neighboring antenna elements. Due to the sparse scattering
nature of mmWave channels, $\boldsymbol{h}$ has a sparse
representation in the beam space (angle) domain:
\begin{align}
\boldsymbol{h}=\boldsymbol{D}\boldsymbol{x} \label{sparse-channel}
\end{align}
where $\boldsymbol{D}\in\mathbb{C}^{N\times N}$ is the discrete
Fourier transform (DFT) matrix, and
$\boldsymbol{x}\in\mathbb{C}^{N}$ is a $K$-sparse vector. If the
true AoA parameters $\{\theta_p\}$ lie on the discretized grid
specified by the DFT matrix, then the number of nonzero entries in
the beam space domain equals the number of signal paths, i.e.
$K=P$. The objective of beam alignment is to estimate the AoD and
the attenuation (in magnitude) of each path, which is equivalent
to recover the location indices and the magnitudes of the nonzero
entries in $\boldsymbol{x}$. The AoD of the dominant path is then
reported back to the base station via a control channel for beam
alignment.

Suppose the transmitter sends a constant signal $s(t)=1$ to the
receiver. The signal received at the $t$th time instant can be
expressed as
\begin{align}
r(t)=\boldsymbol{h}^T\boldsymbol{b}(t)s(t)+w(t)
=\boldsymbol{x}^T\boldsymbol{D}^T\boldsymbol{b}(t)+w(t)
\label{eqn11}
\end{align}
where $\boldsymbol{b}(t)\in\mathbb{C}^{N}$ is the
precoding/beamforming vector used by the transmitter at the $t$th
time instant, and $w(t)$ denotes the additive complex Gaussian
noise with zero mean and variance $\sigma^2$. Since a hybrid
analog and digital beamforming structure is employed at the
transmitter, the precoding vector can be expressed as
\begin{align}
\boldsymbol{b}(t)=\boldsymbol{F}_{\text{RF}}(t)\boldsymbol{f}_{\text{BB}}(t)
\label{eqn12}
\end{align}
in which $\boldsymbol{F}_{\text{RF}}(t)\in\mathbb{C}^{N\times R}$
and $\boldsymbol{f}_{\text{BB}}(t)\in\mathbb{C}^{R}$ represent the
radio frequency (RF) precoding matrix and the baseband (BB)
precoding vector, respectively. Specifically, to provide a
sufficient beamforming gain for signal reception, the transmitter
needs to form multiple beams simultaneously and steers them
towards different directions. To this objective, the RF precoding
matrix is chosen to be a submatrix of the complex conjugate of the
DTF matrix, $\boldsymbol{D}^{\ast}$
\begin{align}
\boldsymbol{F}_{\text{RF}}(t)=\boldsymbol{D}^{\ast}\boldsymbol{S}(t)
\label{eqn14}
\end{align}
where $\boldsymbol{S}(t)\in\mathbb{R}^{N\times R}$ is a column
selection matrix containing only one nonzero entry per column.
Note that each column of the DFT matrix can be considered as a
beamforming vector steering a beam to a certain direction. Hence,
the RF precoding matrix defined in (\ref{eqn14}) forms $R$ beams
towards different directions simultaneously.

Substituting (\ref{eqn12})--(\ref{eqn14}) into (\ref{eqn11}), we
obtain
\begin{align}
r(t)=\boldsymbol{x}^T\boldsymbol{a}(t)+w(t)=\boldsymbol{a}^T(t)\boldsymbol{x}+w(t)
\label{eqn6}
\end{align}
where
$\boldsymbol{a}(t)\triangleq\boldsymbol{S}(t)\boldsymbol{f}_{\text{BB}}(t)$
is an $N$-dimensional sparse vector with at most $R$ nonzero
elements. It should be noted (\ref{eqn6}) is an ideal model
without taking the CFO effect into account. In mmWave
communications, CFO is a factor that cannot be neglected, and, due
to the CFO between the transmitter and the receiver, the
measurements $r(t)$ will incur an additional unknown phase shift
that varies across time \cite{AbariHassanieh16}. Correcting this
unknown phase shift is difficult due to the high frequencies of
mmWave signals. In this case, only the magnitude information of
the measurements $r(t), t=1,\ldots,T$ is reliable.

Our objective is to devise a measurement matrix
$\boldsymbol{A}\triangleq
[\boldsymbol{a}(1)\phantom{0}\ldots\phantom{0}\boldsymbol{a}(T)]^T\in\mathbb{C}^{T\times
N}$ (referred to as sparse encoding) and develop a fast and
efficient recovery algorithm (referred to as phaseless decoding)
to recover $\boldsymbol{z}=|\boldsymbol{x}|$, i.e. the support and
magnitude of the sparse signal $\boldsymbol{x}$, from compressive
phaseless measurements:
\begin{align}
\boldsymbol{y}\triangleq
|\boldsymbol{r}|=|\boldsymbol{A}\boldsymbol{x}+\boldsymbol{w}|
\end{align}
where $\boldsymbol{r}\triangleq
[r(1)\phantom{0}\ldots\phantom{0}r(T)]^T$, and
$\boldsymbol{w}\triangleq
[w(1)\phantom{0}\ldots\phantom{0}w(T)]^T$. Note that the
measurement matrix $\boldsymbol{A}$ cannot be designed freely. As
discussed earlier, the transmitter has to form directional beams
for signal reception, otherwise the power of the signal may be too
weak to be received. To meet such a requirement, a sparse
structure is placed on $\boldsymbol{A}$:
\begin{itemize}
\item[C1] $\boldsymbol{A}$ is a sparse matrix with each row of
$\boldsymbol{A}$ containing at most $R$ nonzero elements.
\end{itemize}
For this reason, the design of the measurement matrix
$\boldsymbol{A}$ is referred to as sparse encoding. Also, since
the amount of time for beamforming training is proportional to the
number of measurement $T$, we wish $\boldsymbol{A}$ is properly
devised such that a reliable estimate of
$\boldsymbol{z}=|\boldsymbol{x}|$ can be obtained by using as few
measurements as possible.

%and can be used to recover the directions and attenuations of
%signal paths.

%We see that the beam alignment problem is now converted to the
%phaseless sparse recovery problem considered in our paper. Also,
%the measurement matrix $\boldsymbol{G}$ can be devised as
%specified in our proposed encoding scheme through choosing proper
%selection matrices $\{\boldsymbol{S}(t)\}$ and baseband combining
%vectors $\{\boldsymbol{f}_{\text{BB}}(t)\}$.

%for signal reconstruction with compressive measurements.

%carefully chosen

\section{Review of Existing Solutions} \label{sec:review}
PhaseCode \cite{PedarsaniYin17} is a CPR algorithm that is most
relevant to our sparse encoding and phaseless decoding problem.
Here we first provide a brief review on PhaseCode. PhaseCode is an
efficient algorithm developed in a sparse-graph coding framework.
It consists of an encoding step and a decoding step. In the
encoding step, the measurement matrix
$\boldsymbol{A}\in\mathbb{C}^{4M\times N}$ is devised according to
\begin{align}
\boldsymbol{A}\triangleq \boldsymbol{H}\odot\boldsymbol{\bar{T}}
\end{align}
where $\odot$ denotes the Khatri-Rao product,
$\boldsymbol{H}\in\{0,1\}^{M\times N}$ is a binary code matrix
constructed using a random bipartite graph $G$ with $N$ left nodes
and $M$ right nodes, with its $(i,j)$th entry $H(i,j)=1$ if and
only if left node $j$ is connected to right node $i$, otherwise
$H(i,j)=0$. $\boldsymbol{\bar{T}}\in\mathbb{C}^{4\times N}$ is the
so-called ``trignometric modulation'' matrix that provides $4$
measurements for each row of $\boldsymbol{H}$, and
$\boldsymbol{\bar{T}}$ is given by
\begin{align}
\boldsymbol{\bar{T}}\triangleq \left[
\begin{array}{cccc}
    e^{j\omega} & e^{j2\omega} & \cdots & e^{jN\omega}  \\
    e^{-j\omega} & e^{-j2\omega} & \cdots & e^{-jN\omega} \\
    2\cos(\omega) & 2\cos(2\omega) & \cdots & 2\cos(N\omega) \\
    e^{j\omega'} & e^{j2\omega'} & \cdots & e^{jN\omega'} \\
  \end{array}
\right]
\end{align}
where $\omega\in (0,2\pi/N]$, and $\omega'$ is a random phase
between $0$ and $2\pi$. In the decoding stage, a delicate
procedure is employed to recover $\boldsymbol{x}$. It was shown in
\cite{PedarsaniYin17} that, in the noiseless case, PhaseCode can
recover a $K$-sparse signal with high probability using only
slightly more than $4K$ measurements. This theoretical result
suggests that the sample complexity required for beam alignment
can be significantly reduced to as low as $\mathcal{O}(K)$.
Nevertheless, there are two major issues when applying PhaseCode
to the beam alignment problem. Firstly, in PhaseCode, the
bipartite graph $G$ used to determine the binary code matrix
$\boldsymbol{H}$ is randomly generated. There is no guarantee that
the resulting measurement matrix $\boldsymbol{A}$ satisfies
constraint C1. Secondly, PhaseCode involves a delicate decoding
procedure requiring a high accuracy of the measurements, and
suffers from severe performance degradation in the presence of
noise. This makes PhaseCode an unsuitable solution for beam
alignment problems where measurements are inevitably contaminated
by noise.

%In the decoding stage, a delicate procedure is employed to recover
%$\boldsymbol{x}$. Briefly speaking, PhaseCode first detects
%whether there is any singleton right node, and if yes, identifies
%the location index and magnitude of the corresponding active left
%node. Then, based on the previous information, PhaseCode proceeds
%to detect whether a right node is a ``mergeable multiton'' or a
%``resolvable multiton'' via a guess and check approach and
%retrieves the remaining information of $\boldsymbol{x}$. The
%trignometric modulation matrix $\boldsymbol{T}$ plays a crucial
%role in determining whether a right node is a singleton, a
%mergeable multiton, or a resolvable multiton.

%In fact, in PhaseCode, both the binary code matrix
%$\boldsymbol{H}$ and the trignometric modulation matrix
%$\boldsymbol{T}$ are subtly designed such that the decoding can be
%efficiently performed using a minimal number of measurements.

%On the other hand, the delicate decoding procedure requires
%high-precision measurements and is susceptible to even small noise
%and measurement errors.

%with the aid of the ``trignometric modulation'' matrix,

%and devises its decoding scheme based on the assumption that

%Also, it is computationally efficient and robust against noise,
%thus serving as a good candidate technology for beam alignment.

%has the potential to significantly expedite the beam alignment
%process.

\begin{figure}[!t]
\centering
\includegraphics[width=8cm]{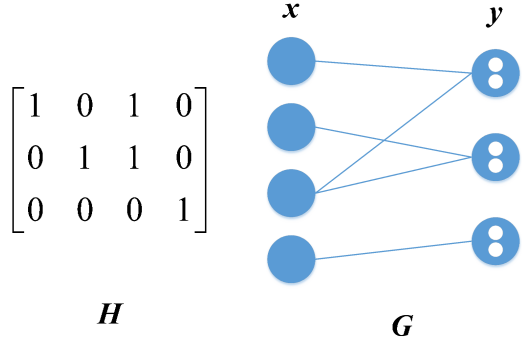}
\caption{The bipartite graph $G$ and its associated binary code
matrix $\boldsymbol{H}$, in which each left node of $G$
corresponds to an component of $\boldsymbol{x}$, and each right
node of $G$ corresponds to the set of measurements obtained via
the corresponding row of $\boldsymbol{H}$.} \label{fig7}
\end{figure}

%The proposed algorithm is computationally efficient and
%noise-robust. Also, it can recover the support and magnitude
%information of a $K$-sparse signal with a sample complexity of
%$\mathcal{O}(K^2)$, thus provides a competitive solution for
%practical mmWave beam alignment systems.

\section{Proposed SBG-Coding Algorithm} \label{sec:SBG-Code}
To overcome the drawbacks of existing solutions, we propose a
sparse bipartite graph-Code (SBG-Code) algorithm for sparse
encoding and phaseless decoding.

%$H_{l}(i,j)=1$ if and only if left node $j$ of $G_l$ is connected
%to right node $i$ of $G_l$, otherwise $H_{l}(i,j)=0$.

\subsection{Sparse Encoding}
Different from PhaseCode, the proposed SBG-Code uses a set of
bipartite graphs $\{G_l\}_{l=1}^L$, instead of a single bipartite
graph, to encode the sparse signal. Let
$\boldsymbol{H}_l\in\{0,1\}^{M\times N}$ denote the binary code
matrix associated with the graph $G_l$ with $N$ left nodes and $M$
right nodes. The $(i,j)$th entry of $\boldsymbol{H}_l$ is given by
\begin{align}
H_{l}(i,j)= \begin{cases} 1 & \text{if and only if left node $j$
of $G_l$ is connected}\\
& \text{to right node $i$ of $G_l$} \\
0 & \text{otherwise}  \end{cases}
\end{align}
Given $\{\boldsymbol{H}_l\}$, the measurement matrix
$\boldsymbol{A}\in\mathbb{R}^{2ML\times N}$ is devised as
\begin{align}
\boldsymbol{A}\triangleq \left[
\begin{array}{c}
    \boldsymbol{H}_1 \odot \boldsymbol{T}   \\
    \boldsymbol{H}_2 \odot \boldsymbol{T} \\
    \vdots \\
    \boldsymbol{H}_L \odot \boldsymbol{T} \\
  \end{array}
\right] \label{A-design}
\end{align}
where $\boldsymbol{T}\in\mathbb{R}^{2\times N}$ is a simplified
trignometric modulation matrix defined as
\begin{align}
\boldsymbol{T}\triangleq \left[
  \begin{array}{cccc}
    1 & 1 & \cdots & 1  \\
    2\cos(\omega) & 2\cos(2\omega) & \cdots & 2\cos(N\omega) \\
  \end{array}
\right] \label{modulation-matrix}
\end{align}
in which $\omega\in(0,\pi/(2N)]$ such that $\cos(\omega
l)\in[0,1)$. We will show later the trignometric function
$\cos(n\omega)$ can be replaced by a general function.

%This is the reason why our proposed algorithm is called as
%GF-Code.

For each graph $G_l$, each of its left node can be deemed as a
component of the sparse signal $\boldsymbol{x}$, and each right
node of $G_l$ refers to a set of $2$ measurements obtained as (see
Fig. \ref{fig7})
\begin{align}
\boldsymbol{y}_{l,m}=|(\boldsymbol{H}_l[m,:]\odot\boldsymbol{T})\boldsymbol{x}|
\quad \forall m=1,\ldots, M
\end{align}
where $\boldsymbol{H}_l[m,:]$ denotes the $m$th row of
$\boldsymbol{H}_l$. A left node, say node $n$, is called as active
left node if the $n$th signal component, $x_n$, is nonzero. For a
$K$-sparse signal $\boldsymbol{x}$, there are $K$ active left
nodes in total. A right node is called as a nullton, a singleton
or a multiton if:
\begin{itemize}
\item Nullton: A right node is a nullton if it is not connected
to any active left node.
\item Singleton: A right node is a singleton if it is connected
to exactly one active left node.
\item Multiton: A right node is a multiton if it is connected to
more than one active left node.
\end{itemize}
A bipartite graph which does not contain any multiton right nodes
is called as
\begin{itemize}
\item No-Multiton-graph (NM-graph): A bipartite
graph whose right nodes are either singletons or nulltons.
\end{itemize}
For our proposed SBG-Code, the purpose of employing multiple
bipartite graphs is to ensure that, with overwhelming probability,
there exists at least an NM-graph, i.e. a bipartite graph whose
right nodes are either singletons or nulltons.

The bipartite graphs $\{G_l\}$ with $N$ left nodes and $M$ ($M>K$)
right nodes are designed as follows. First, for simplicity, we
assume $r\triangleq N/M$ to be an integer. For each graph, we
randomly divide $N$ left nodes into $M$ equal-size, disjoint sets
(i.e. each set has $r$ left nodes) and establish a one-to-one
correspondence between $M$ sets of left nodes and $M$ right nodes.
If $N$ is not an integer multiple of $M$, we can still divide $N$
left nodes into $M$ disjoint sets, with all sets, except the last
one, consisting of $r=\text{floor}(N/M)$ left nodes. Clearly, a
right node is a singleton (nullton) if its corresponding set of
left nodes contains only one (zero) active left node. As to be
shown later, such design helps maximize the probability that a
bipartite graph is an NM-graph, i.e. its rights nodes are either
singletons or nulltons. Clearly, for each bipartite graph $G_l$
devised as described, its corresponding binary code matrix
$\boldsymbol{H}_l$ has only one nonzero element per column, and at
most $r$ nonzero elements per row. As a result, each row of the
resulting measurement matrix $\boldsymbol{A}$ contains at most $r$
nonzero elements. We can therefore choose $r\leq R$, which is
equivalent to $M\geq N/R$, such that $\boldsymbol{A}$ satisfies
constraint C1. Once $\boldsymbol{A}$ is given, the RF precoding
matrices $\{\boldsymbol{F}_{\text{RF}}(t)\}$ and baseband
precoding vectors $\{\boldsymbol{f}_{\text{BB}}(t)\}$ can be
accordingly determined.

%Note that for our proposed algorithm, there is no need to detect
%whether a right node is a singleton or a multiton since we conduct
%the decoding as if all right nodes are either singletons or
%nulltons. As a result, the trignometric modulation matrix can be
%greatly simplified.

%and its role is simply to help determine the location and
%magnitude of the corresponding active left node.

%the right nodes of the bipartite graph $G_l$ are either singletons
%or nulltons.

%Here each right node of $G_l$ refers to a set of $2$ measurements
%expressed as
%\begin{align}
%\boldsymbol{y}_{l,m}=|(\boldsymbol{H}_{l}[m,:] \odot
%\boldsymbol{\tilde{T}})\boldsymbol{x}| \quad \forall m=1,\ldots, M
%\label{right-node-expression}
%\end{align}
%where $\boldsymbol{H}_{l}[m,:]$ denotes the $m$th row of
%$\boldsymbol{H}_{l}$.

\subsection{Phaseless Decoding}
Next, we discuss how to retrieve the support and magnitude
information of $\boldsymbol{x}$ from compressive phaseless
measurement $\boldsymbol{y}$. We first ignore the observation
noise in order to simplify our exposition and analysis, i.e.
\begin{align}
\boldsymbol{y}=|\boldsymbol{A}\boldsymbol{x}| \label{CPR-problem}
\end{align}
Let
\begin{align}
\boldsymbol{A}_l\triangleq\boldsymbol{H}_l \odot \boldsymbol{T}
\label{Al-definition}
\end{align}
denote the $l$th measurement sub-matrix associated with the
bipartite graph $G_l$, and
\begin{align}
\boldsymbol{y}_l\triangleq|\boldsymbol{A}_l\boldsymbol{x}|
\end{align}
denote the corresponding measurements. Suppose $G_l$ is an
NM-graph. If a right node is a nullton, it does not connect to any
active left nodes and thus we have
$\boldsymbol{y}_{l,m}=\boldsymbol{0}$. Therefore we only need to
consider those singleton right nodes. A singleton right node means
that only one nonzero component of $\boldsymbol{x}$, say $x_n$,
contributes to the value of $\boldsymbol{y}_{l,m}$. More
precisely, we can write
\begin{align}
\boldsymbol{y}_{l,m}=\left[
  \begin{array}{c}
    |x_n|  \\
    |2\cos(n\omega)x_n| \\
  \end{array}
\right]
\end{align}
Clearly, the magnitude and location index of $x_n$ can be readily
estimated as
\begin{align}
z_{\hat{n}}=& y_{l,m}^{(1)} \nonumber\\
\hat{n}=&
\frac{1}{\omega}\arccos\left(\frac{y_{l,m}^{(2)}}{2y_{l,m}^{(1)}}\right)
\label{eqn1}
\end{align}
where $y_{l,m}^{(1)}$ and $y_{l,m}^{(2)}$ denote the first and
second entry of $\boldsymbol{y}_{l,m}$, respectively. Note that
the graph $G_l$ is designed such that each right node is
exclusively connected to a subset of left nodes, and every left
node belongs to a certain subset that is connected to a certain
right node. Therefore, by performing the estimation (\ref{eqn1})
for all singleton right nodes, we are guaranteed to find the
location indices and magnitudes of all active left nodes. From the
above discussion, we see that if a bipartite graph, say graph
$G_l$, is an NM-graph, then $\boldsymbol{z}=|\boldsymbol{x}|$ can
be recovered from the corresponding phaseless measurements
$\boldsymbol{y}_l$.

%the right nodes of the designed graph are either singletons or
%nulltons.

%In fact, it is likely that some right nodes are multitons which
%are connected to more than one active left node.

The problem is that since we do not have the support information
of the sparse signal in advance, there is no guarantee that a
designed graph is an NM-graph which only contains singleton and
nullton right nodes. To address this issue, we employ multiple
bipartite graphs to encode the sparse signal, with the hope that
there exists at least one NM-graph. Note that in our algorithm, we
do not need to know which bipartite graph is an NM-graph. We just
perform the decoding as if the graph is an NM-graph, even if this
may not be true. To see this, suppose the graph $G_l$ is not an
NM-graph and contains a multiton. The multiton right node is a
superposition of multiple active left nodes, say, $x_{n_1}$ and
$x_{n_2}$, i.e.
\begin{align}
\boldsymbol{y}_{l,m}=\left[
  \begin{array}{c}
    |x_{n_1}+x_{n_2}|  \\
    |2(\cos(n_1\omega)x_{n_1}+\cos(n_2\omega)x_{n_2})| \\
  \end{array}
\right]
\end{align}
Clearly, performing (\ref{eqn1}) by treating
$\boldsymbol{y}_{l,m}$ as a singleton yields incorrect location
index and magnitude information. Nevertheless, in this case, it is
clear that the estimate of $\boldsymbol{z}=|\boldsymbol{x}|$ based
on $\boldsymbol{y}_{l}$, denoted as $\boldsymbol{\hat{z}}^{(l)}$,
contains less than $K$ nonzero components. This is an important
observation based on which we can differentiate the correct
estimate from incorrect estimates. Due to the fact that $K$ is
unknown in practice, given the $L$ estimates
$\{\boldsymbol{\hat{z}}^{(l)}\}_{l=1}^L$, the final estimate can
be chosen to be the one which has the largest number of nonzero
entries. Obviously, our proposed algorithm succeeds to recover the
support and magnitude of the sparse signal as long as there exists
at least one NM-graph. For clarity, our proposed algorithm is
summarized in Algorithm \ref{algorithm1}.

\begin{algorithm}[!t]
\caption{Proposed SBG-Code Algorithm}
\begin{algorithmic}
\STATE {Given
$\boldsymbol{A}_l=\boldsymbol{H}_l\odot\boldsymbol{\tilde{T}}$ and
$\boldsymbol{y}_l$ for each bipartite graph $G_l$, $l=1,\ldots,L$
\FOR{$l=1,\ldots,L$} \FOR{$m=1,\ldots,M$} \IF
{$\boldsymbol{y}_{l,m} \neq \boldsymbol{0}$} \STATE Assume the
$m$th right node is a singleton. \STATE Estimate the magnitude and
the location index of the active left node connected to the $m$th
right node via (\ref{eqn1}) \ENDIF \ENDFOR \STATE Obtain an
estimate of $\boldsymbol{z}$, denoted as
$\hat{\boldsymbol{z}}^{(l)}$. \ENDFOR \STATE Given the $L$
estimates $\{\boldsymbol{\hat{z}}^{(l)}\}_{l=1}^L$, choose the
estimate that has the largest number of nonzero entries as the
final estimate.}
\end{algorithmic}
\label{algorithm1}
\end{algorithm}

We see that, through the use of multiple bipartite graphs, the
proposed SBG-Code circumvents the complicated decoding procedure
that is needed by PhaseCode to check whether a right node is a
singleton, a mergeable multiton or a resolvable multiton. Although
the use of multiple bipartite graphs could bring a higher sample
complexity, the simplified decoding procedure can help improve the
robustness against measurement errors and noise.

\subsection{Discussions}
It should be noted that the cosine function used in
(\ref{modulation-matrix}) to encode the sparse signal can be
replaced by a general function. For example, a linear function
$f(n)=n/N$ can be employed to encode the sparse signal, in which
case the trignometric modulation matrix $\boldsymbol{T}$ is
expressed as
\begin{align}
\boldsymbol{T}= \left[
  \begin{array}{cccc}
    1 & 1 & \cdots & 1  \\
    1/N & 2/N & \cdots & 1 \\
  \end{array}
\right] \label{modulation-matrix-2}
\end{align}
Correspondingly, the $m$th singleton right node can be written as
\begin{align}
\boldsymbol{y}_{l,m}=\left[
  \begin{array}{c}
    |x_n|  \\
    |n x_n/N| \\
  \end{array}
\right]
\end{align}
and the magnitude and location index of $x_n$ can be readily
estimated as
\begin{align}
z_{\hat{n}}=&y_{l,m}^{(1)} \nonumber\\
\hat{n}=& \frac{N y_{l,m}^{(2)}}{y_{l,m}^{(1)}} \label{estimator2}
\end{align}

%The proposed sparse encoding scheme can be easily applied to
%mmWave beam alignment systems. As discussed earlier, in our sparse
%encoding scheme, the bipartite graphs $\{G_l\}$ are designed to
%ensure that each right node is connected to at most $r=N/M$ left
%nodes. Therefore the resulting measurement matrix $\boldsymbol{A}$
%will satisfy condition C1 as long as $r\leq R$. Note that the
%number of right nodes per bipartite graph, $M$, is an adjustable
%parameter. In practice, we can choose $M\geq N/R$ such that $r\leq
%R$ holds true.

%to obtain a desired sparse encoding matrix $\boldsymbol{A}$.

%This is a fact based on which we can tell whether the recovered
%result is correct or not .

%For these bipartite graphs whose right nodes contain multitons,

%the right nodes of a bipartite graph being either singletons or
%nulltons

\section{Theoretical Analysis for SBG-Code} \label{sec:theoretical-analysis}
We now provide theoretical guarantees for our proposed SBG-Code
scheme. We first analyze the probability of a bipartite graph
being an NM-graph. To simplify our analysis, we assume
$r\triangleq N/M$ is an integer. The results are summarized as
follows.

\subsection{Main Results}
\newtheorem{proposition}{Proposition}
\begin{proposition}
Suppose we have
\begin{align}
\boldsymbol{y}_l=|\boldsymbol{A}_l\boldsymbol{x}| \label{eqn2}
\end{align}
where $\boldsymbol{x}\in\mathbb{C}^{N}$ is a $K$-sparse signal,
and the location indexes of its nonzero components are chosen
uniformly at random. $\boldsymbol{A}_l$ is defined in
(\ref{Al-definition}), in which
$\boldsymbol{H}_l\in\{0,1\}^{M\times N}$ is a binary code matrix
constructed according to a given bipartite graph $G_l$.
Specifically, $\boldsymbol{H}_l$ (i.e. $G_l$) is designed such
that each column of $\boldsymbol{H}_l$ has at least one nonzero
element, and the $m$th row of $\boldsymbol{H}_l$ has $r_{m}$
nonzero elements. If $M\geq K$, then the probability that all
right nodes of $G_l$ are either singletons or nulltons is upper
bounded by
\begin{align}
P(\text{$G_l$ is an NM-graph}) \leq r^K C_M^K/
C_N^K\triangleq\lambda
  \label{inequality1}
\end{align}
where $C_N^K$ denotes the number of $K$-combinations from a set
with $N$ elements. Also, the inequality (\ref{inequality1})
becomes an equality if and only if
\begin{align}
r_{1}=\cdots=r_{M}=r
\end{align}
\label{proposition1}
\end{proposition}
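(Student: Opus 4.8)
The plan is to reduce the event that $G_l$ is an NM-graph to a purely combinatorial count over the uniformly random size-$K$ set of active left nodes, and then to estimate that count via an inequality on elementary symmetric polynomials.

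First I would note that $G_l$ is an NM-graph precisely when, for every right node $m$, the neighbour set $N_m$ (of cardinality $r_m$) contains at most one active left node. Writing $S\subseteq\{1,\dots,N\}$ for the (uniformly random, size-$K$) set of active left nodes,
\begin{align}
P(\text{$G_l$ is an NM-graph})=\frac{\#\big\{S:\,|S|=K,\ |S\cap N_m|\le 1\ \forall m\big\}}{C_N^K}. \nonumber
\end{align}
In the SBG-Code construction each column of $\boldsymbol{H}_l$ has exactly one nonzero, so $\{N_m\}_{m=1}^{M}$ partitions $\{1,\dots,N\}$ into blocks of sizes $r_1,\dots,r_M$ with $\sum_m r_m=N$; and more generally, if some columns carry several nonzeros, deleting edges until every column has exactly one nonzero only enlarges the set of valid $S$ (it removes constraints) while it replaces each $r_m$ by some $r_m'\le r_m$ with $\sum_m r_m'=N$. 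Hence it suffices to bound the count in the partition case.

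In the partition case a valid $S$ is specified by choosing the $K$ blocks it meets and then one representative inside each, so
\begin{align}
\#\big\{S:\,|S|=K,\ |S\cap N_m|\le 1\ \forall m\big\}=\sum_{1\le m_1<\cdots<m_K\le M}r_{m_1}\cdots r_{m_K}=e_K(r_1,\dots,r_M), \nonumber
\end{align}
the $K$-th elementary symmetric polynomial — this is where $M\ge K$ enters. By Maclaurin's inequality applied to the normalized symmetric means (equivalently, the Newton inequalities chained from $e_1/M$ down to $e_K/C_M^K$),
\begin{align}
e_K(r_1,\dots,r_M)\le C_M^K\Big(\tfrac{r_1+\cdots+r_M}{M}\Big)^{K}=C_M^K\Big(\tfrac{N}{M}\Big)^{K}=r^{K}C_M^K, \nonumber
\end{align}
with equality iff $r_1=\cdots=r_M$, and since they sum to $N=Mr$ this forces $r_1=\cdots=r_M=r$ (which in turn makes every column degree equal to one, so the edge-deletion reduction is lossless exactly in this case). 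Dividing by $C_N^K$ yields $P(\text{$G_l$ is an NM-graph})\le r^{K}C_M^K/C_N^K=\lambda$ together with the claimed equality condition.

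The direct counting step and the final division are routine; the substantive ingredient is the symmetric-polynomial estimate $e_K\le C_M^K\bar{r}^{K}$ with its equality case, so the main point to treat carefully is invoking (or giving a short self-contained AM--GM / Newton-inequality argument for) that bound, and verifying that the reduction from the ``at least one nonzero per column'' hypothesis to the partition case moves the count in the right direction.
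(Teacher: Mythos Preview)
Your proposal is correct and matches the paper's argument almost step for step: both reduce to the partition case via an edge-deletion monotonicity observation, both express the favourable count as the elementary symmetric polynomial $e_K(r_1,\dots,r_M)$, and both bound it by $r^K C_M^K$ with equality iff all $r_m$ are equal. The only difference is cosmetic: the paper proves $e_K\le r^K C_M^K$ by an explicit induction on $K$ using AM--GM on the ratios $r_{i_k}/r_{i_j}$, whereas you invoke Maclaurin's inequality by name---which is exactly the inequality the paper is re-deriving.
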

\begin{proof}
See Appendix \ref{appA}.
\end{proof}

From Proposition \ref{proposition1}, we know that the probability
of a bipartite graph being an NM-graph is maximized when
$r_{m}=r,\forall m$, in which case each column of
$\boldsymbol{H}_l$ has only one nonzero element, and each row of
$\boldsymbol{H}_l$ has exactly $r$ nonzero elements. This result
explains why we devise the bipartite graphs $\{G_l\}$ as discussed
in Section \ref{sec:SBG-Code}.A. Based on this result, our
proposed phaseless decoding scheme can recover
$\boldsymbol{z}=|\boldsymbol{x}|$ from compressive phaseless
measurements with probability given as follows.

\newtheorem{theorem}{Theorem}
\begin{theorem}
Consider the phaseless decoding problem described in
(\ref{CPR-problem}), where the measurement matrix
$\boldsymbol{A}\in\mathbb{R}^{2ML\times N}$ is generated according
to our proposed sparse encoding scheme. If $M\geq K$, then our
proposed algorithm can recover $\boldsymbol{z}=|\boldsymbol{x}|$
from phaseless measurements (\ref{CPR-problem}) with probability
exceeding
\begin{align}
  p = 1-\left(1-\lambda\right)^L
  \label{success-rate}
\end{align}
where $\lambda$ is defined in (\ref{inequality1}).
\label{theorem1}
\end{theorem}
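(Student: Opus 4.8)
The plan is to show that Algorithm~\ref{algorithm1} returns $\boldsymbol{z}=|\boldsymbol{x}|$ whenever at least one of the $L$ graphs $G_1,\ldots,G_L$ is an NM-graph, and then to lower bound the probability of this event by combining Proposition~\ref{proposition1} with the mutual independence of the graphs.

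First I would establish a dichotomy for the per-graph estimate $\hat{\boldsymbol{z}}^{(l)}$. If $G_l$ is an NM-graph, the discussion preceding the theorem shows that applying (\ref{eqn1}) to each nonzero $\boldsymbol{y}_{l,m}$ returns the exact index and magnitude of the unique active left node attached to the corresponding singleton right node — indeed $\hat n=\omega^{-1}\arccos(\cos(n\omega))=n$ because $n\omega\in(0,\pi/2]$ — and since the left-node subsets form a partition of $\{1,\ldots,N\}$, every active left node is recovered, so $\hat{\boldsymbol{z}}^{(l)}=\boldsymbol{z}$, a vector with exactly $K$ nonzero entries. If instead $G_l$ contains a multiton, let $s$ and $t\ge 1$ be the numbers of singleton and multiton right nodes; each active left node lies in a singleton or a multiton, and each multiton contains at least two of them, so $s+2t\le K$. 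Only singleton and multiton right nodes produce a nonzero $\boldsymbol{y}_{l,m}$, and each such right node triggers a single write into $\hat{\boldsymbol{z}}^{(l)}$; hence $\hat{\boldsymbol{z}}^{(l)}$ has at most $s+t$ nonzero entries, and possibly fewer when a multiton yields a spurious index, an index that collides with another write, or a zero magnitude through cancellation. Since $t\ge 1$ forces $s\le K-2$, we get $s+t\le (K+s)/2\le K-1<K$. Thus a graph containing a multiton always produces strictly fewer than $K$ recovered nonzeros.

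It follows that whenever some $G_l$ is an NM-graph, every NM-graph index gives the identical vector $\boldsymbol{z}$ with exactly $K$ nonzeros while every other index gives fewer than $K$; hence the selection rule of Algorithm~\ref{algorithm1} — pick the estimate with the most nonzero entries, ties broken arbitrarily — outputs $\boldsymbol{z}$. Now I would bound the probability that at least one $G_l$ is an NM-graph. Conditioning on the support $\Omega$ of $\boldsymbol{x}$: each $G_l$ is built from an independent uniformly random partition of the $N$ left nodes into $M$ blocks of size $r$, which by exchangeability of the labels is equivalent to fixing the partition and drawing $\Omega$ uniformly, so Proposition~\ref{proposition1} with $r_1=\cdots=r_M=r$ (the equality case) gives $P(G_l\text{ is an NM-graph}\mid\Omega)=\lambda=r^K C_M^K/C_N^K$, independent of $\Omega$. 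Since the $L$ partitions are independent, the events $\{G_l\text{ is an NM-graph}\}$ are conditionally independent given $\Omega$, so $P(\text{no }G_l\text{ is an NM-graph}\mid\Omega)=(1-\lambda)^L$, whence $P(\exists\, l:\ G_l\text{ is an NM-graph}\mid\Omega)=1-(1-\lambda)^L$; as this value does not depend on $\Omega$, it also holds unconditionally. Combining the two parts yields $P(\text{success})\ge 1-(1-\lambda)^L=p$.

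I do not expect a genuinely hard step; the two points needing care are (i) the counting bound above certifying that any multiton-bearing graph recovers strictly fewer than $K$ nonzeros — so that the ``largest number of nonzeros'' rule never prefers a corrupted estimate over a correct one — and (ii) the exchangeability/conditional-independence argument that lets Proposition~\ref{proposition1} be applied to each $G_l$ with a common $\lambda$ and the $L$ outcomes multiplied together. The inequality in (\ref{success-rate}), rather than equality, simply reflects that the algorithm may occasionally recover $\boldsymbol{z}$ even when none of the $G_l$ is an NM-graph.
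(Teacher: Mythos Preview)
Your proposal is correct and follows the same route as the paper's proof: success whenever at least one $G_l$ is an NM-graph, then apply Proposition~\ref{proposition1} together with independence across the $L$ graphs to get $1-(1-\lambda)^L$. The paper's argument in Appendix~\ref{appB} is essentially a two-line sketch of this; your counting bound $s+t\le K-1$ for multiton-bearing graphs and your exchangeability/conditional-independence justification are welcome rigor that the paper leaves implicit.
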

\begin{proof}
See Appendix \ref{appB}.
\end{proof}

Note that our proposed algorithm requires a total number of $T=
2ML$ phaseless measurements, in which $M$ is the number of right
nodes per bipartite graph and $L$ is the number of bipartite
graphs. From (\ref{inequality1}), we see that increasing $M$ helps
achieve a higher $\lambda$, which in turn leads to a higher
recovery probability for our algorithm. On the other hand,
increasing $L$ improves the probability that there exists at least
one NM-graph among those $L$ bipartite graphs, and thus can also
enhance the recovery probability. Therefore, given the total
number of measurements $T$ fixed, there is a tradeoff between the
choice of $M$ and $L$. Here we provide an example to illustrate
this tradeoff. Suppose $N=128$, $K=2$ and $T=32$. The parameters
$M$ and $L$ can be chosen as one of the following cases, and the
exact recovery probability of our proposed algorithm can be
accordingly calculated:
\begin{itemize}
  \item $M=16, L=1$: $p=94.4882\%$
  \item $M=8, L=2$: $p=98.6050\%$
  \item $M=4, L=4$: $p=99.6450\%$
  \item $M=2, L=8$: $p=99.6333\%$
\end{itemize}
From this example, we see that choosing a moderate value for $M$
and $L$ provides the best performance.

% T\geq & c\delta K/\log[1/(1-\lambda)]
% \label{number-measurements-1}

\subsection{Analysis of Sample Complexity}
Let $M=\delta K$, where $\delta>1$ is parameter whose choice will
be discussed later. From Theorem \ref{theorem1}, we can derive the
number of bipartite graphs required for perfectly recovering
$|\boldsymbol{x}|$ with probability exceeding a prescribed
threshold $p_0$:
\begin{align}
L\geq &
\frac{\log(1-p_0)}{\log(1-\lambda)}=\frac{\log[(1-p_0)^{-1}]}{\log[(1-\lambda)^{-1}]}
\end{align}
As a result, the total number of measurements required for exact
recovery with probability exceeding $p_0$ is given by
\begin{align}
T=2ML=2\delta KL\geq \frac{c\delta K }{\log[(1-\lambda)^{-1}]}
\label{number-measurements-1}
\end{align}
where $c\triangleq 2\log[(1-p_0)^{-1}]> 0$ is a constant
determined by $p_0$. Note that $\lambda$ defined in
(\ref{inequality1}) can be lower bounded by
\begin{align}
  \lambda &= \frac{M!}{M^K(M-K)!}\frac{N^K(N-K)!}{N!} \nonumber \\
  &\geq \frac{M!}{M^K(M-K)!} \nonumber \\
  &\geq \frac{(M-K+1)^K}{M^K} %\nonumber \\
  =\left(1-\frac{1-K^{-1}}{\delta}\right)^K\triangleq f(K,\delta)
\end{align}
Define
\begin{align}
h(K,\delta) \triangleq \frac{1}{\log[(1-f(K,\delta))^{-1}]}
\end{align}
The term on the right-hand side of (\ref{number-measurements-1})
can be upper bounded by
\begin{align}
\frac{c\delta K }{\log[(1-\lambda)^{-1}]} \leq c\delta K
h(K,\delta) \label{eqn3}
\end{align}
To facilitate analyzing the sample complexity of our proposed
algorithm, we choose $\delta=K$, i.e. $M=K^2$, which is a choice
usually offering satisfactory performance. In this case, it can be
easily proved that the function $f(K,\delta)$ decreases with an
increasing $K$, and
\begin{align}
\lim_{K\rightarrow +\infty} f(K,\delta)\mid_{\delta=K} = e^{-1}
\label{eqn5}
\end{align}
Therefore $h(K,\delta)\mid_{\delta=K}$ can be upper bounded by
\begin{align}
h(K,\delta)\mid_{\delta=K}\leq \frac{1}{\log[(1-e^{-1})^{-1}]}
\approx 1.51 \label{eqn4}
\end{align}
Combining (\ref{eqn3}) and (\ref{eqn4}), we can reach that, when
$\delta=K$, the term on the right-hand side of
(\ref{number-measurements-1}) is upper bounded by
\begin{align}
\frac{c\delta K }{\log[(1-\lambda)^{-1}]}  \leq 1.51cK^2
\end{align}
In other words, if the total number of phaseless measurements $T$
satisfies
\begin{align}
 T\geq 1.51cK^2 \label{sample-complexity}
\end{align}
then our proposed algorithm can perfectly recover
$|\boldsymbol{x}|$ with probability exceeding $p_0$. From
(\ref{sample-complexity}), we see that the sample complexity for
our proposed algorithm is of order $\mathcal{O}(K^2)$, which,
surprisingly, is independent of the dimension of the sparse
signal, $N$. Such a result can be well explained because for the
typical choice of $\delta=K$, the probability of a bipartite graph
being an NM-graph is lower bounded by $e^{-1}$ (cf. (\ref{eqn5}))
even for an arbitrarily large $N$. But notice that the irrelevance
of the sample complexity to $N$ is achieved by increasing $r$
since we have $r=N/M$ and $M$ is kept fixed as $K^2$ as $N$ grows.
In the beam alignment application, $r$ cannot become arbitrarily
large due to the limited number of RF chains.

Although a typical choice of $M=K^2$ is adopted for analyzing the
sample complexity, it should not be difficult to reach a similar
conclusion for a general choice of $M$ with $M=\mathcal{O}(K^2)$.
As a comparison, note that the sample complexity attained by most
compressive phase retrieval methods
\cite{ShechtmanBeck14,BahmaniRomberg15} and the AgileLink beam
steering scheme \cite{AbariHassanieh16,HassaniehAbari18} is of
order $\mathcal{O}(K\log(N))$.

%Therefore our proposed algorithm has a lower sample complexity
%than these methods if $K\ll N$, which is usually the case for many
%practical applications.

%As for the PhaseCode scheme, although it achieves an optimal
%sample complexity of order $\mathcal{O}(K)$, it is very sensitive
%to measurement errors and the extension of PhaseCode to the noisy
%case yields poor performance. In contrast, our proposed GF-Code
%algorithm achieves a better robustness against noise via employing
%multiple bipartite graphs and bypassing the delicate decoding
%procedure. In the following, we show that the robustness of our
%proposed algorithm can be further enhanced by more carefully
%devising the encoding matrix and the decoding scheme.

%our proposed algorithm inevitably suffers from estimation errors.

\section{Robust SBG-Code Algorithm} \label{sec:robust-SBG-Code}
The basic idea of our proposed GF-Code algorithm is to divide the
$N$ components of $\boldsymbol{x}$ (i.e. $N$ left nodes) into $M$
disjoint sets, and each set of left nodes is connected to an
individual right node. If a right node is a singleton, it means
that its corresponding set of left nodes contains only one active
left node whose location and magnitude can be easily estimated via
(\ref{eqn1}) or (\ref{estimator2}), depending on which modulation
matrix is used. Such an idea works perfectly for the noiseless
case. Nevertheless, when the measurements are corrupted by noise,
a perfect estimate of the magnitude of the active left node is
impossible. Besides, the location index of the active left node
may be incorrectly estimated as well. In the following, we develop
a robust scheme for sparse encoding and phaseless decoding in the
presence of noise.

%Since in many practical applications, the support of the sparse
%signal is of most concern, we are particularly interested in
%devising a robust scheme to improve the support estimation
%accuracy.

%To gain insight into the idea of our proposed robust scheme,

%, is used and the support and magnitude of the sparse signal is
%estimated according to (\ref{estimator2}).

%denoted as $\boldsymbol{y}_m$

\subsection{Robust Sparse Encoding}
To facilitate our following exposition, the trignometric
modulation matrix (\ref{modulation-matrix}) or
(\ref{modulation-matrix-2}) is expressed as a general form as
\begin{align}
\boldsymbol{T}\triangleq \left[
  \begin{array}{cccc}
    1 & 1 & \cdots & 1  \\
    t_1 & t_2 & \cdots & t_N \\
  \end{array}
\right] \label{general-modulation-matrix}
\end{align}
where $t_i\neq t_j$ for $i\neq j$, and $t_n>0,\forall
n=1,\ldots,N$. Let $\{m_1^{(l)},\ldots,m_r^{(l)}\}$ denote the set
of indices of the left nodes connected to the $m$th right node of
the graph $G_l$. Note that the index set
$\{m_1^{(l)},\ldots,m_r^{(l)}\}$ is determined once the
corresponding bipartite graph $G_l$, i.e. $\boldsymbol{H}_l$, is
given. Here we assume $r=N/M$ is an integer. The extension to the
non-integer case is straightforward, as discussed earlier in
Section \ref{sec:SBG-Code}. Also, for simplicity, the superscript
$l$ used to denote the index of the bipartite graph is omitted,
and in the following, $\{m_1^{(l)},\ldots,m_r^{(l)}\}$ is
simplified as $\{m_1,\ldots,m_r\}$.

Suppose the $m$th right node is a singleton and $x_{m_i}$ is the
active left node connected to the $m$th right node, in which
$m_i\in\{m_1,\ldots,m_r\}$. When noise is present, the
measurements corresponding to the $m$th right node of the graph
$G_l$ can be expressed as
\begin{align}
\boldsymbol{y}_{l,m}=\left[
  \begin{array}{c}
    |x_{m_i}+w_{l,m}^{(1)}|  \\
    \left|t_{m_i}x_{m_i}+w_{l,m}^{(2)}\right| \\
  \end{array}
\right] \triangleq \left[
  \begin{array}{c}
    y_{l,m}^{(1)}  \\
    y_{l,m}^{(2)} \\
  \end{array}
\right]
\end{align}
where $w_{l,m}^{(1)}$ and $w_{l,m}^{(2)}$ denote the observation
noise added to the first and the second entry of the $m$th right
node, respectively. In this case, the location index of the active
left node can be estimated as
\begin{align}
\hat{m_i}=&\ \mathop{\arg\min}_{m_i\in\{m_1,\ldots,m_r\}}
\left|t_{m_i}-\frac{y_{l,m}^{(2)}}{y_{l,m}^{(1)}}\right|
\label{location-estimator}
\end{align}
The problem lies in that, if the index set $\{m_1,\ldots,m_r\}$
contains an element $m_j$ such that $t_{m_j}$ is close to
$t_{m_i}$, then it is likely that the location index of the active
left node is misidentified as $m_j$ since when noise is present,
we may have
\begin{align}
\left|t_{m_i}-\frac{y_{l,m}^{(2)}}{y_{l,m}^{(1)}}\right|>
\left|t_{m_j}-\frac{y_{l,m}^{(2)}}{y_{l,m}^{(1)}}\right|
\end{align}
To improve robustness against noise, it is clear that the absolute
difference $|t_{m_i}-t_{m_j}|$ should be as large as possible for
any pair of indices $\{m_i,m_j\}$ chosen from the set
$\{m_1,\ldots,m_r\}$.

Inspired by this insight, we propose to use an individual
modulation matrix for each bipartite graph. Specifically, the
modulation matrix for each bipartite graph is a column-permuted
version of the original modulation matrix, i.e.
\begin{align}
\boldsymbol{T}_l=\boldsymbol{T}\boldsymbol{P}_l \quad \forall l
\label{robust-modulation-matrix}
\end{align}
where $\boldsymbol{T}_l$ denotes the modulation matrix for graph
$G_l$, and $\boldsymbol{P}_l$ is a permutation matrix to be
devised. Write
\begin{align}
\boldsymbol{T}_l\triangleq \left[
  \begin{array}{cccc}
    1 & 1 & \cdots & 1  \\
    t_1^{(l)} & t_2^{(l)} & \cdots & t_N^{(l)} \\
  \end{array}
\right]
\end{align}
Following a similar deduction, the location index of the active
left node associated with the $m$th right node can be estimated as
\begin{align}
\hat{m_i}=&\ \mathop{\arg\min}_{m_i\in\{m_1,\ldots,m_r\}}
\left|t_{m_i}^{(l)}-\frac{y_{l,m}^{(2)}}{y_{l,m}^{(1)}}\right|
\end{align}
Therefore, if the permutation matrix $\boldsymbol{P}_l$ is devised
such that for each right node $m$, the elements in the
corresponding set $\{t_{m_1}^{(l)},\ldots,t_{m_r}^{(l)}\}$ are
sufficiently separated, then the robustness against noise can be
improved. To put it in a mathematical way, define the pairwise
distance associated with the $m$th right node as
\begin{align}
d_{m}^{(l)}\triangleq \min_{1\leq i<j\leq r}
\left|t_{m_i}^{(l)}-t_{m_j}^{(l)}\right|
\end{align}
Then the design of $\boldsymbol{P}_l$ can be formulated as a
Max-Min problem whose objective is to maximize the minimum
pairwise distance among the pairwise distances associated with $M$
right nodes, i.e.
\begin{align}
\max_{\boldsymbol{P}_l}\min_{m} \quad d_{m}^{(l)}
\end{align}
Such an optimization can be solved by searching for all possible
permutation matrices. Note that it is more advantageous to use an
individual permutation matrix for each graph than using a common
permutation matrix for all graphs because employing an individual
permutation matrix for each graph can help achieve a larger
minimum pairwise distance.

%since each bipartite graph has a unique input-output connection
%pattern,

%Note that given a bipartite graph $G_l$, the index set of the left
%nodes connected to each right node is determined.

%To see why this column-permuted trignometric modulation matrix
%helps improve the support estimation accuracy,

%whose objective is to permute the columns of
%$\boldsymbol{\tilde{T}}$.

\subsection{Robust Phaseless Decoding}
We next devise a robust decoding scheme to estimate
$\boldsymbol{z}=|\boldsymbol{x}|$ from noisy measurements
$\boldsymbol{y}$. In the noisy case, the measurements
$\boldsymbol{y}$ are written as
\begin{align}
\boldsymbol{y}=|\boldsymbol{A}\boldsymbol{x}+\boldsymbol{w}|
\end{align}
where the measurement matrix $\boldsymbol{A}$ is expressed as
\begin{align}
\boldsymbol{A}\triangleq \left[
\begin{array}{c}
    \boldsymbol{A}_1   \\
    \boldsymbol{A}_2 \\
    \vdots \\
    \boldsymbol{A}_L \\
  \end{array}
\right] \triangleq \left[
\begin{array}{c}
    \boldsymbol{H}_1 \odot \boldsymbol{T}_1   \\
    \boldsymbol{H}_2 \odot \boldsymbol{T}_2 \\
    \vdots \\
    \boldsymbol{H}_L \odot \boldsymbol{T}_L \\
  \end{array}
\right]
\end{align}
and the modulation matrix $\boldsymbol{T}_l$ for graph $G_l$ is
given by (\ref{robust-modulation-matrix}). The measurements
associated with the bipartite graph $G_l$ are give by
\begin{align}
  \boldsymbol{y}_l\triangleq|\boldsymbol{A}_l\boldsymbol{x}+\boldsymbol{w}_l|
\end{align}
and the measurements, $\boldsymbol{y}_{l,m}\in\mathbb{R}^{2}$,
corresponding to the $m$th right node of $G_l$ are expressed as
\begin{align}
\boldsymbol{y}_{l,m}=|(\boldsymbol{H}_{l}[m,:] \odot
\boldsymbol{T}_l)\boldsymbol{x}+\boldsymbol{w}_{l,m}| \quad
\forall m=1,\ldots, M
\end{align}
where $\boldsymbol{w}_{l,m}$ denotes the noise added to the $m$th
right node of $G_l$. Due to the presence of noise, we usually have
$\boldsymbol{y}_{l,m}\neq\boldsymbol{0}$ even if the $m$th right
node is a nullton. Hence we first need to decide whether a right
node of $G_l$ is a nullton or not. Such a problem can be
formulated as a binary hypothesis test problem:
\begin{align}
  &H_0:\ y_{l,m}^{(1)} = |w_{l,m}^{(1)}| \nonumber \\
  &H_1:\ y_{l,m}^{(1)} = \left|\sum_{m_i\in S} x_{m_i}+w_{l,m}^{(1)}\right|
\end{align}
where $w_{l,m}^{(1)}$ is the additive complex Gaussian noise with
zero mean and variance $\sigma^2$, and $S$ denotes the set of
indices of those active left nodes that are connected to the $m$th
right node. A simple energy detector can be used to perform the
detection:
\begin{align}
y_{l,m}^{(1)}\underset{H_0}{\overset{H_1}{\gtrless}}\epsilon
\label{energy-detector}
\end{align}
It is clear that $y_{l,m}^{(1)}$ under $H_0$ follows a Rayleigh
distribution. Given a prescribed false alarm probability, the
threshold $\epsilon>0$ can be easily determined from the
distribution of $y_{l,m}^{(1)}$ under $H_0$. Such an energy
detector is able to yield satisfactory detection performance for a
moderate and high signal-to-noise ratio.

To proceed with our decoding scheme, we assume all nullton right
nodes of $G_l$ are correctly identified. In this case, we are able
to determine whether $G_l$ is an NM-graph or not. Specifically, if
$G_l$ is an NM-graph, then it contains $M-K$ nullton right nodes;
otherwise the number of nullton right nodes is greater than $M-K$.
Although the number of active left nodes, $K$, is unknown \emph{a
priori}, those graphs which have the smallest number of nullton
right nodes can be considered as NM-graphs and $K$ can be simply
estimated as
\begin{align}
\hat{K}=M-J
\end{align}
where $J$ denotes the smallest number of nullton right nodes among
all graphs.

We now perform decoding on those NM-graphs. Suppose $G_l$ is an
NM-graph and its $m$th right node is a singleton. Also, $x_{m_i}$
is the active left node connected to the $m$th right node. From
the discussion in the previous subsection, it is clear that the
magnitude and location index of this active left node can be
estimated as
\begin{align}
z_{\hat{m}_i}=&\ y_{l,m}^{(1)}  \nonumber \\
\hat{m}_i=&\ \mathop{\arg\min}_{m_i\in\{m_1,\ldots,m_r\}}
\left|t_{m_i}^{(l)}-\frac{y_{l,m}^{(2)}}{y_{l,m}^{(1)}}\right|
\label{estimator3}
\end{align}
where $\{m_1,\ldots,m_r\}$ denotes the set of indices of those
left nodes connected to the $m$th right node. After performing
(\ref{estimator3}) for all singleton right nodes, we are able to
obtain an estimate of $\boldsymbol{z}=|\boldsymbol{x}|$. Let
$\boldsymbol{\hat{z}}^{(l)}$ denote an estimate of
$\boldsymbol{z}$ obtained from the measurements associated with
$G_l$. Since we may have more than one NM-graphs, we are able to
collect multiple estimates of $\boldsymbol{z}$. The problem lies
in, due to the existence of noise, these multiple estimates,
denoted as
$\{\boldsymbol{\hat{z}}^{(1)},\ldots,\boldsymbol{\hat{z}}^{(I)}\}$,
are not exactly the same. In the following, we propose a
set-intersection scheme to combine these multiple estimates into a
more accurate estimate.

%in terms of both the estimated magnitudes and the estimated
%support sets

%and usually achieves satisfactory performance under a moderate or
%high signal-to-noise ratio

%Recall that for each bipartite graph, the $N$ left nodes are
%divided into $M$ disjoint sets, and there is a one-to-one
%correspondence between $M$ sets of left nodes and $M$ right nodes.
%For an NM-graph, since its right nodes are either nulltons or
%singletons, it is clear that to identify the location indices of
%the active left nodes, we first need to decide which sets of left
%nodes contain an active left node (such a set is referred to as
%\emph{active set}), that is, decide which right nodes are
%singletons. As indicated earlier, an energy detector
%(\ref{energy-detector}) can be employed to accomplish this task.
%After those active sets are identified, we next use the estimator
%(\ref{estimator3}) to determine which element in each active set
%is an active left node. Note that due to the presence of noise,
%the estimator (\ref{estimator3}) may yield an incorrect estimate
%of the location index of the active left node. Nevertheless, we
%show that location estimation errors due to (\ref{estimator3}) may
%be corrected via a set-intersection-check scheme.

%which set of left nodes contains a certain active left node, and
%thus we do not know

%share a common active left node and

To better illustrate our idea, suppose there are two NM-graphs,
say $G_i$ and $G_j$, and $x_n$ is the only active left node in
$\boldsymbol{x}$. Recall that for each bipartite graph, the $N$
left nodes are divided into $M$ disjoint sets, with each set of
left nodes connected to an individual right node. Let
$S^{(i)}_{n}$ denote the set of left nodes to which $x_n$ belongs
in graph $G_i$, and $S^{(j)}_{n}$ denote the set of left nodes to
which $x_n$ belongs in graph $G_j$. Suppose the singleton right
nodes in both $G_i$ and $G_j$ are correctly identified. Then we
know that $x_n$ belongs to both $S^{(i)}_{n}$ and $S^{(j)}_{n}$.
If the intersection of the two sets $S^{(i)}_{n}$ and
$S^{(j)}_{n}$, $S^{(i)}_{n}\cap S^{(j)}_{n}$, contains only one
element, then it must be $x_n$ and the location of $x_n$ can be
uniquely determined. Such an idea can be easily extended to the
scenario where there are more then two NM-graphs, and for such a
case, the set-intersection scheme is more likely to succeed
because the more sets are used, the higher the probability of the
intersection of these sets containing only one element.

There, however, is a problem for the general case where
$\boldsymbol{x}$ contains multiple nonzero components (i.e.
multiple active left nodes). In this case, we have no idea which
set of left nodes a certain active node belongs to for each
NM-graph. As a result, it is impossible to determine which sets
should be put together to perform the intersection operation. To
overcome this difficulty, we note that the magnitudes of those
active left nodes are generally different. Hence the estimated
magnitude can be used to identify a certain active left node.
Without loss of generality, let $x_1,\ldots,x_K$ denote the
nonzero components of $\boldsymbol{x}$ in decreasing order in
terms of magnitude, i.e. $|x_1|> \cdots
> |x_K|>0$. For each NM-graph, say graph $G_i$, we can obtain an estimate of
$|\boldsymbol{x}|$, denoted as $\boldsymbol{z}^{(i)}$.
Specifically, let $\hat{z}_{i_1}>\cdots>\hat{z}_{i_K}>0$ represent
the nonzero components of $\boldsymbol{\hat{z}}^{(i)}$, then the
$k$th largest element $\hat{z}_{i_k}$ can be regarded as an
estimate of $|x_k|$. For each NM-graph, say $G_i$, the set of left
nodes containing $x_k$ can therefore be determined as the set of
left nodes containing $\hat{z}_{i_k}$. A set intersection
operation can then be performed to yield the final estimate of the
location index of $x_k$. On the other hand, the magnitude of the
$k$th largest component of $\boldsymbol{x}$ can be estimated as
the average of all estimates, i.e.
\begin{align}
|\hat{x}_{k}|=\frac{1}{I}\sum_{i=1}^{I}\hat{z}_{i_k}
\end{align}
Note that if the intersection of the sets contains more than one
element, then we randomly pick up an element in the intersection
set as the estimate of the location index of $x_k$. In addition,
in case the intersection is an empty set, which is possible due to
the incorrect association between $\{x_1,\ldots,x_K\}$ and
$\{\hat{z}_{i_1},\ldots,\hat{z}_{i_K}\}$, we randomly select an
estimate from
$\{\boldsymbol{\hat{z}}^{(1)},\ldots,\boldsymbol{\hat{z}}^{(I)}\}$
as the final estimate. For clarity, our proposed robust SBG-Code
algorithm is summarized in Algorithm \ref{algorithm2}. We see the
proposed decoding algorithm involves very simple addition and
multiplication calculations, and thus is amiable for practical
implementation.

\begin{algorithm}[!h]
\caption{Robust SBG-Code Algorithm}
\begin{algorithmic}
\STATE {Given
$\boldsymbol{A}_l=\boldsymbol{H}_l\odot\boldsymbol{\tilde{T}}_l$
and $\boldsymbol{y}_l$ for each bipartite graph $G_l$,
$l=1,\ldots,L$ \FOR{$l=1,\ldots,L$} \STATE Decide whether a right
node of $G_l$ is a nullton or not via the energy detector
(\ref{energy-detector}). Count the number of nulltons of $G_l$.
\ENDFOR \STATE Find graphs that have the smallest number of
nulltons and consider them as NM-graphs \FOR{$l=1,\ldots,L$}
\IF{$G_l$ is an NM-graph} \FOR {$m=1,\ldots,M$} \IF
{$y_{l,m}^{(1)}>\epsilon$} \STATE Assume the $m$th right node is a
singleton \STATE Estimate the magnitude and the location index of
the active left node connected to the $m$th right node via
(\ref{estimator3}) \ENDIF \ENDFOR \STATE Obtain the estimate
$\hat{\boldsymbol{z}}^{(l)}$ \ENDIF \ENDFOR \STATE Given multiple
estimates $\{\boldsymbol{\hat{z}}^{(l)}\}_{l=1}^I$ \IF{$I=1$}
\STATE Choose $\boldsymbol{\hat{z}}^{(1)}$ as the final estimate
\ELSE \STATE Resort to the set-intersection-check scheme to obtain
a final estimate \ENDIF }
\end{algorithmic}
\label{algorithm2}
\end{algorithm}

%achieved simply by applying the following formulation on both
%sides

%\section{Potential Applications}

%whose objective is to select a subset of columns of
%$\boldsymbol{D}$

%and the baseband precoding matrix $\boldsymbol{F}_{\text{BB}}(t)$
%is devised to be a sparse matrix with each row of the precoding
%matrix containing only a few nonzero entries

\section{Extension to Antenna Array Receiver}
\label{sec:extension}
In Section \ref{sec:system-model}, we assume the receiver employs
an omni-directional antenna that receives in all directions. In
this section, we extend to the case where both the transmitter and
the receiver have antenna arrays for beam alignment. With a slight
abuse of notation, we let $N_t$ and $N_r$ denote the number of
antennas at the transmitter and the receiver, respectively. The
mmWave channel is characterized by a geometric channel model
\begin{align}
  \boldsymbol{G} = \sum_{p=1}^{P}\alpha_{p}\boldsymbol{a}_r(\theta_{p})\boldsymbol{a}_t^H(\phi_{p})
\end{align}
where $P$ is the number of paths, $\alpha_{p}$ is the complex gain
associated with the $p$th path, $\theta_{p}\in[0,2\pi]$ and
$\phi_{p}\in[0,2\pi]$ are the associated azimuth angle of arrival
(AoA) and angle of departure (AoD), respectively, and
$\boldsymbol{a}_{r}\in\mathbb{C}^{N_r}$
($\boldsymbol{a}_{t}\in\mathbb{C}^{N_t}$) denotes the receiver
(transmitter) array response vector. We assume that the uniform
linear array is used at both the transmitter and receiver. Since
there are only a few paths between the transmitter and the
receiver, the channel matrix in the beam space domain has a sparse
representation
\begin{align}
  \boldsymbol{G} = \boldsymbol{D}_r\boldsymbol{\bar{G}}\boldsymbol{D}_t^H
\end{align}
where $\boldsymbol{D}_r\in\mathbb{C}^{N_r\times N_r}$ and
$\boldsymbol{D}_t\in\mathbb{C}^{N_t\times N_t}$ are the DFT
matrices, and $\boldsymbol{\bar{G}}\in\mathbb{C}^{N_r \times N_t}$
is a sparse matrix. Suppose the transmitter sends a constant
signal $s(t) = 1$ to the receiver. The phaseless measurement
received at the $t$th time instant can be expressed as
\begin{align}
  y(t) &= |\boldsymbol{c}^H(t)\boldsymbol{G}\boldsymbol{b}(t)s(t) + w(t)| \nonumber \\
  &=|\boldsymbol{c}^H(t)\boldsymbol{D}_r\boldsymbol{\bar{G}}\boldsymbol{D}_t^H\boldsymbol{b}(t) + w(t)|
\end{align}
where $\boldsymbol{c}(t)$ denotes the combining vector used at the
receiver. To perform beam alignment, we can let the receiver steer
its beam to a fixed direction over a period of time (or multiple
beams towards different directions if multiple RF chains at the
receiver are available), and let the transmitter send the
codewords devised according to our proposed sparse encoding
scheme. Specifically, the receiver uses a certain column of
$\boldsymbol{D}_r$ as its combining vector, i.e.
$\boldsymbol{c}(t)=\boldsymbol{D}_r[:,i]$, over a period of time,
say $t=1,\ldots,T$. The beamforming vector employed by the
transmitter is the same as discussed in Section
\ref{sec:system-model}, i.e.
$\boldsymbol{b}(t)=\boldsymbol{D}_t\boldsymbol{S}(t)\boldsymbol{f}_{\text{BB}}(t)\triangleq
\boldsymbol{D}_t\boldsymbol{a}(t)$. Thus we have
\begin{align}
  y(t) &= |\boldsymbol{a}^H(t)\boldsymbol{\bar{g}}_i + w^*(t)|
  \quad  t=1,\ldots,T
\end{align}
where $\boldsymbol{\bar{g}}_i$ denotes the $i$th column of
$\boldsymbol{\bar{G}}^H$. We see that the problem is now converted
to the sparse encoding and phaseless decoding problem discussed in
this paper, and our proposed scheme can be used to recover
$|\boldsymbol{\bar{g}}_i|$. After the receiver has scanned all
possible $N_r$ beam directions, we are able to obtain the full
knowledge of $|\boldsymbol{\bar{G}}|$, based on which the best
beamformer-combiner pair can be obtained. Such a beam alignment
scheme has a sample complexity of
$\mathcal{O}(N_r\bar{K}^2/R_{r})$, where $R_r$ represents the
number of RF chains at the receiver, and
$\bar{K}\triangleq\max\{K_1,\ldots,K_{N_r}\}$, with $K_i$ denoting
the number of nonzero entries in the $i$th column of
$\boldsymbol{\bar{G}}^H$, i.e. $\boldsymbol{\bar{g}}_i$. Clearly,
$\bar{K}$ is much smaller than the total number of paths $P$.

%To further reduce the sample complexity,
%the receiver, instead of sequentially scanning the entire space,
%can adopt an omnidirectional beam shape as suggested by the IEEE
%802.11ad standard. In this case, we can employ our proposed scheme
%to obtain an estimate of an equally-weighted linear combination of
%the rows of $|\boldsymbol{\bar{G}}|$, based on which the AoD of
%the path with the largest magnitude can be roughly estimated. In
%this case, the sample complexity of this protocol can be reduced
%to as low as $\mathcal{O}(K^2)$.

%This role of the transmitter and the receiver is then reversed to
%obtain the AoA of the corresponding path.

%From this estimate, AoAs associated with all paths can be
%recovered.

%Besides the scanning method, we can also let the transmitter form
%a quasi-omnidirectional beam, while the receiver applies the
%proposed algorithm to estimate the AoAs. This process is then
%reversed, i.e. the receiver form a quasi-omnidirectional beam,
%while the transmitter applies the proposed algorithm to estimate
%the AoDs. In this case, since the $\boldsymbol{x}$ to be recovered
%is a weighted sum of columns (rows) of $\boldsymbol{X}$, we can
%only estimate the AoAs (AoDs), but the power of each path is still
%unknown. Therefore, the quasi-omnidirectional beam method works
%better for the case when the number of paths is one.

%the received signal can be expressed as
%\begin{align}
%  y(t) &= |\boldsymbol{a}^H(t)\boldsymbol{\bar{H}}\boldsymbol{\tilde{b}}+ w(t)|
%  \quad  t=1,\ldots,T
%\end{align}
%where $\boldsymbol{\tilde{b}}\triangleq
%\boldsymbol{D}_t\boldsymbol{b}$ is a

\begin{figure*}[!t]
\centering
 \subfigure[Success rates vs. $T$.]{\includegraphics[width=2.3in]{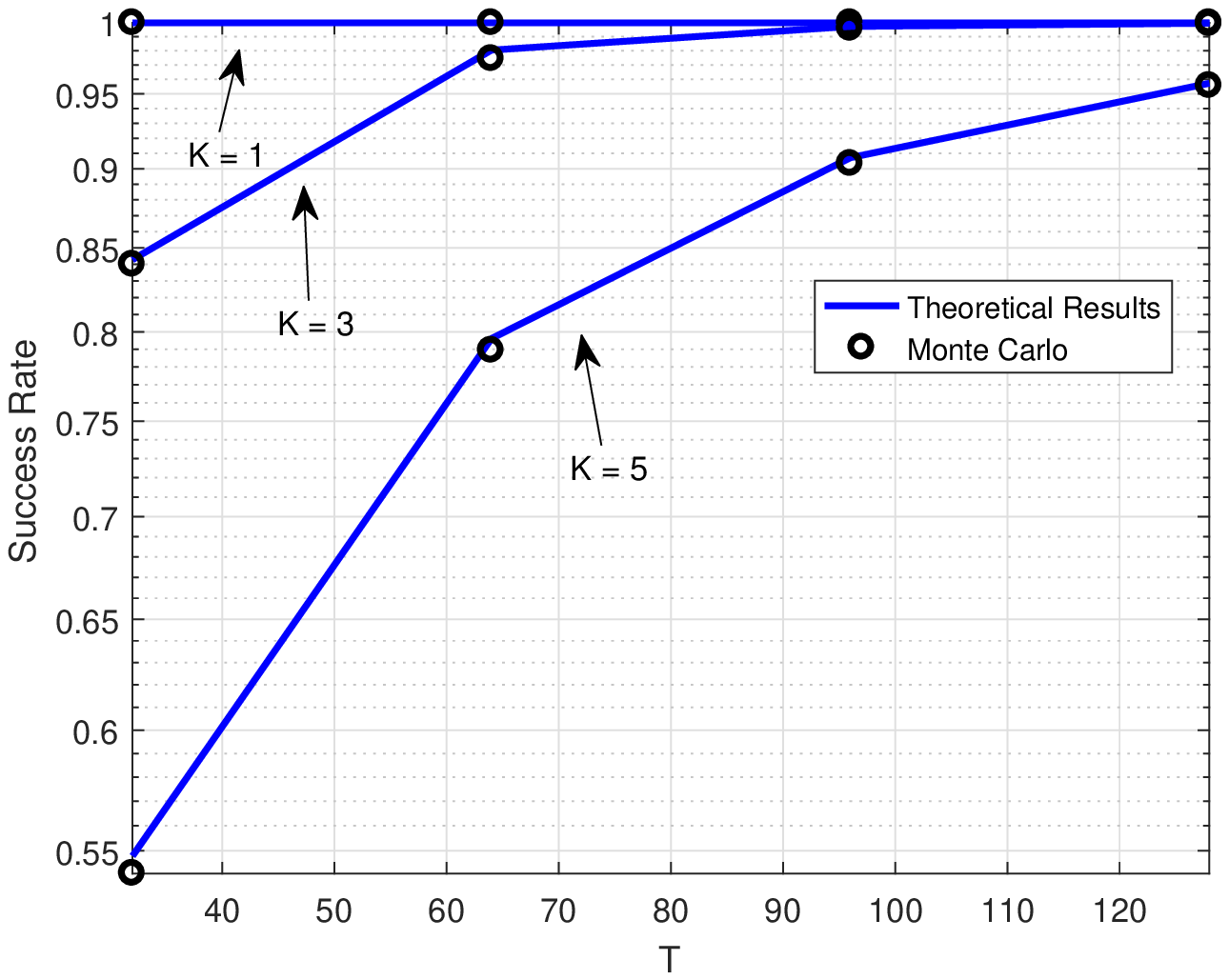}} \hfil
 \subfigure[Success rates vs. $N$.]{\includegraphics[width=2.3in]{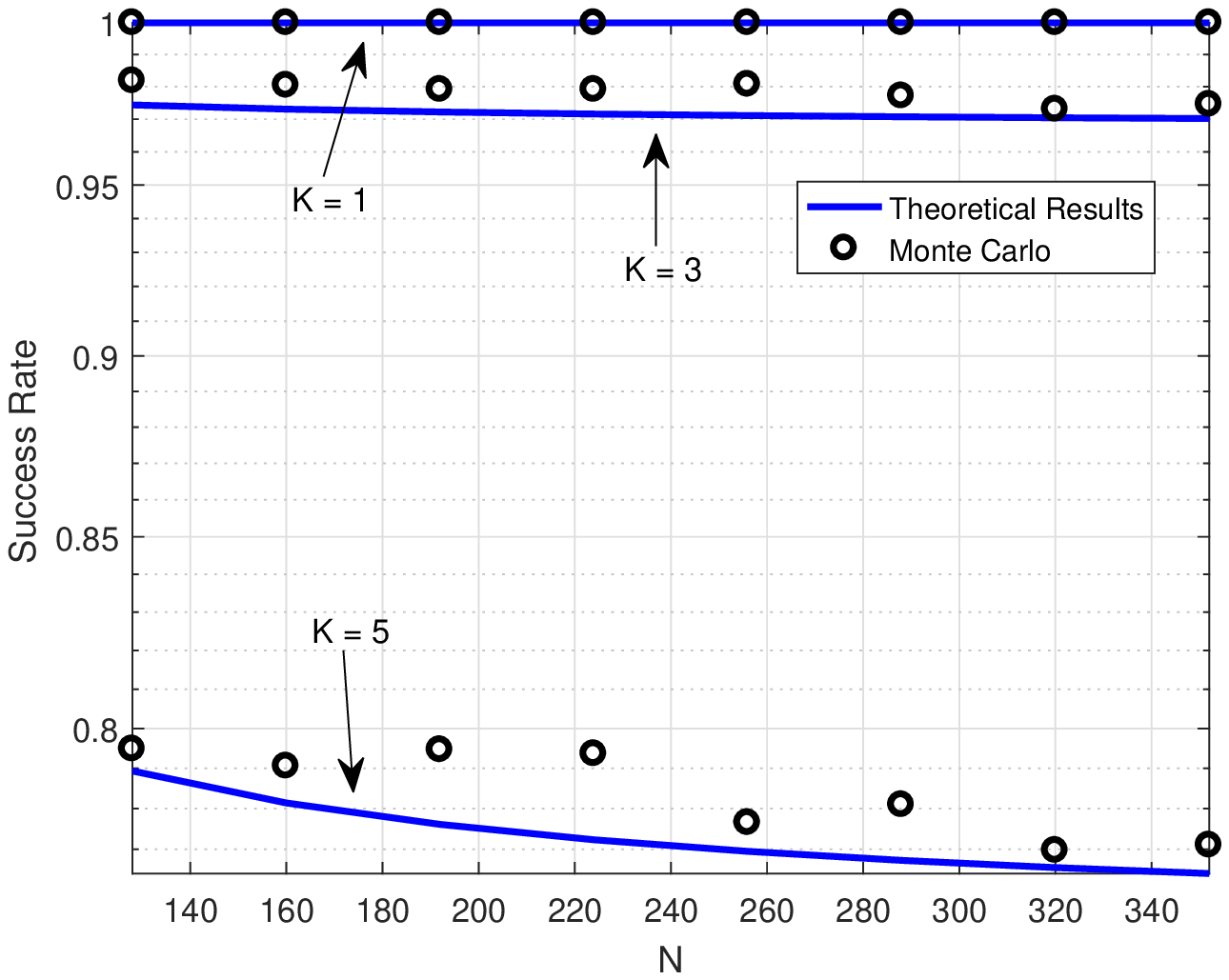}} \hfil
 \subfigure[Success rates vs. $M$.]{\includegraphics[width=2.3in]{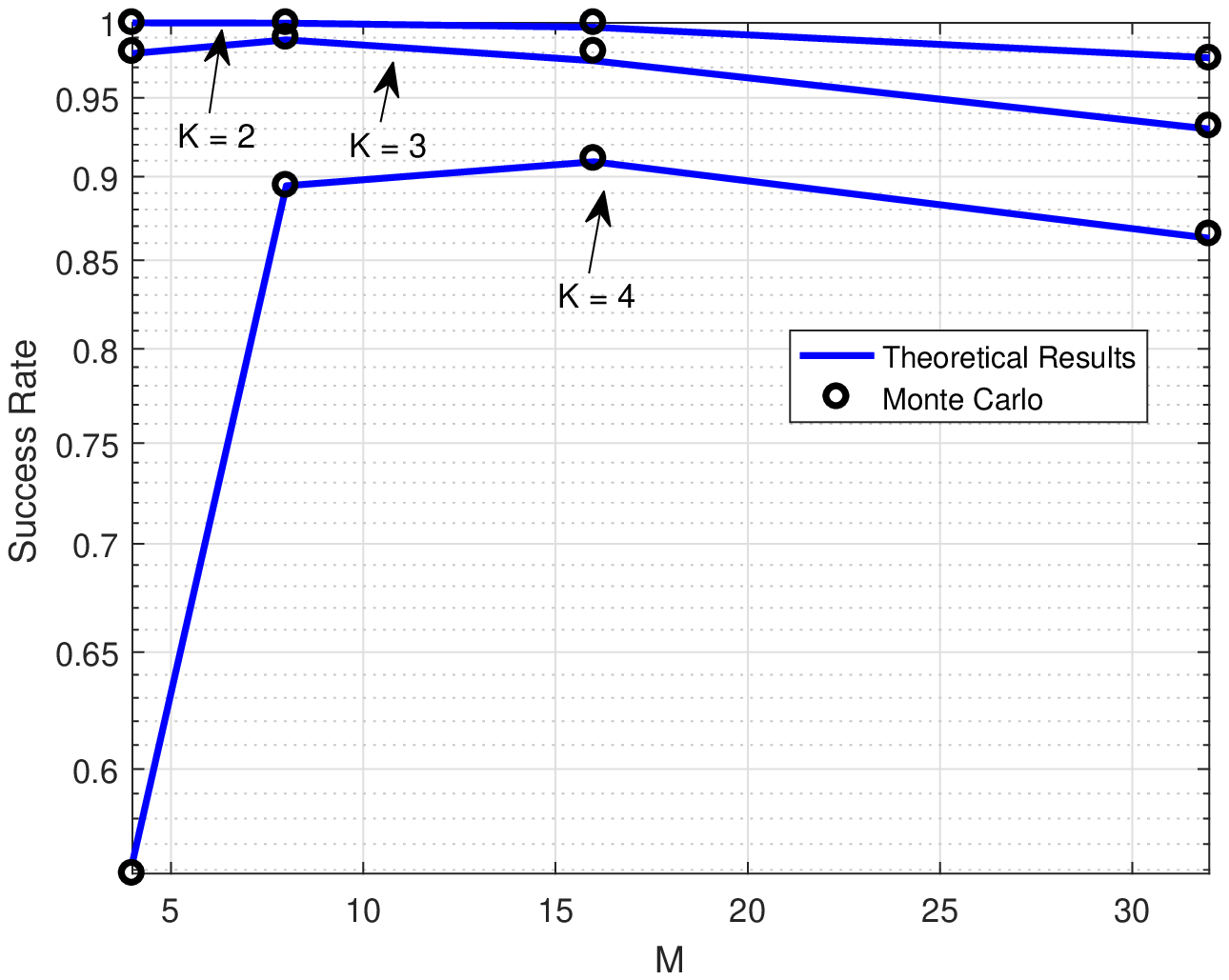}}
\caption{Success rates of our proposed method vs. $T$, $N$, and
$M$ in the noiseless case.} \label{fig1}
\end{figure*}

\begin{figure*}[!t]
\centering \subfigure[NMSEs vs.
$T$.]{\includegraphics[width=3.5in]{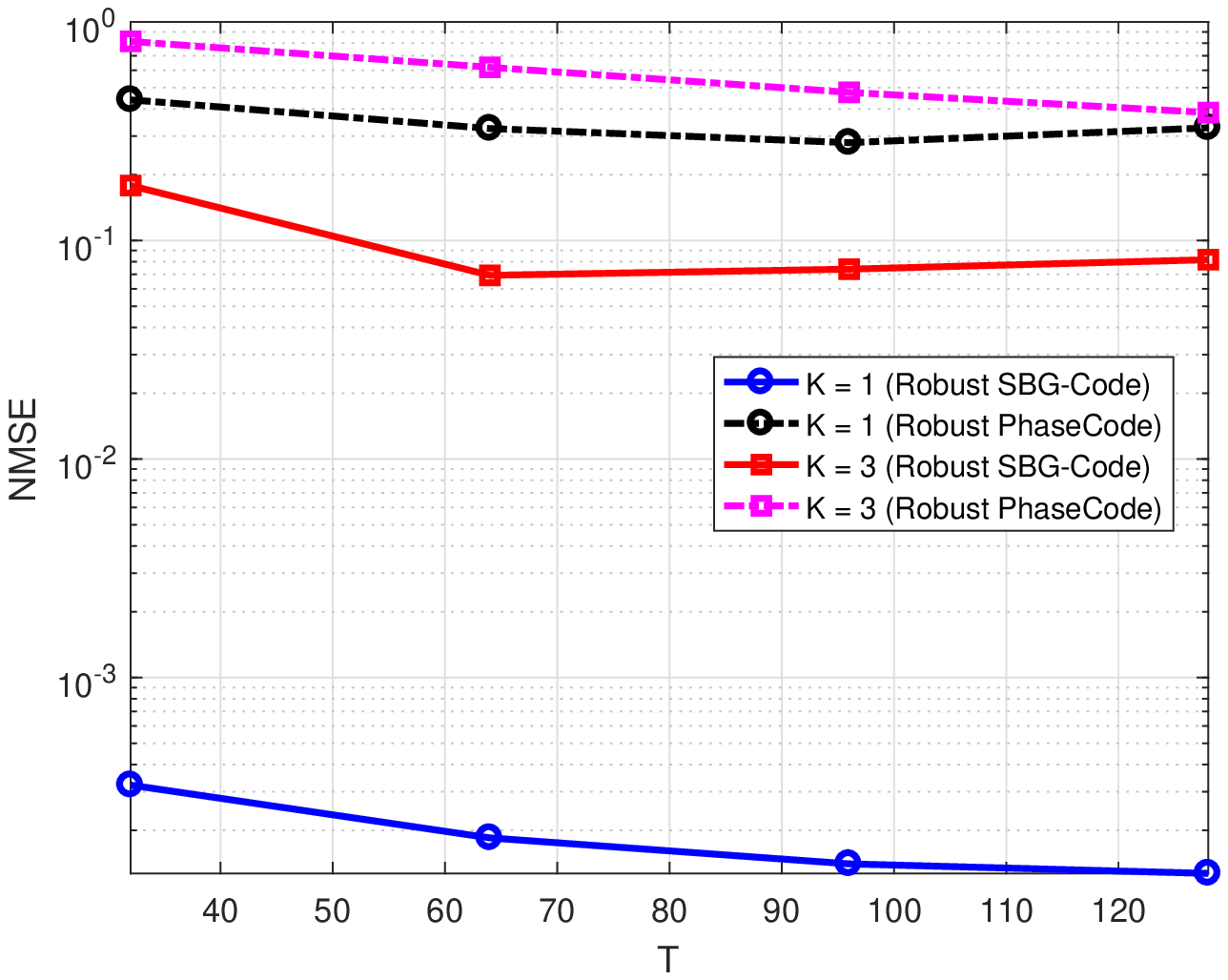}} \hfil
\subfigure[NMSEs vs.
SNR.]{\includegraphics[width=3.5in]{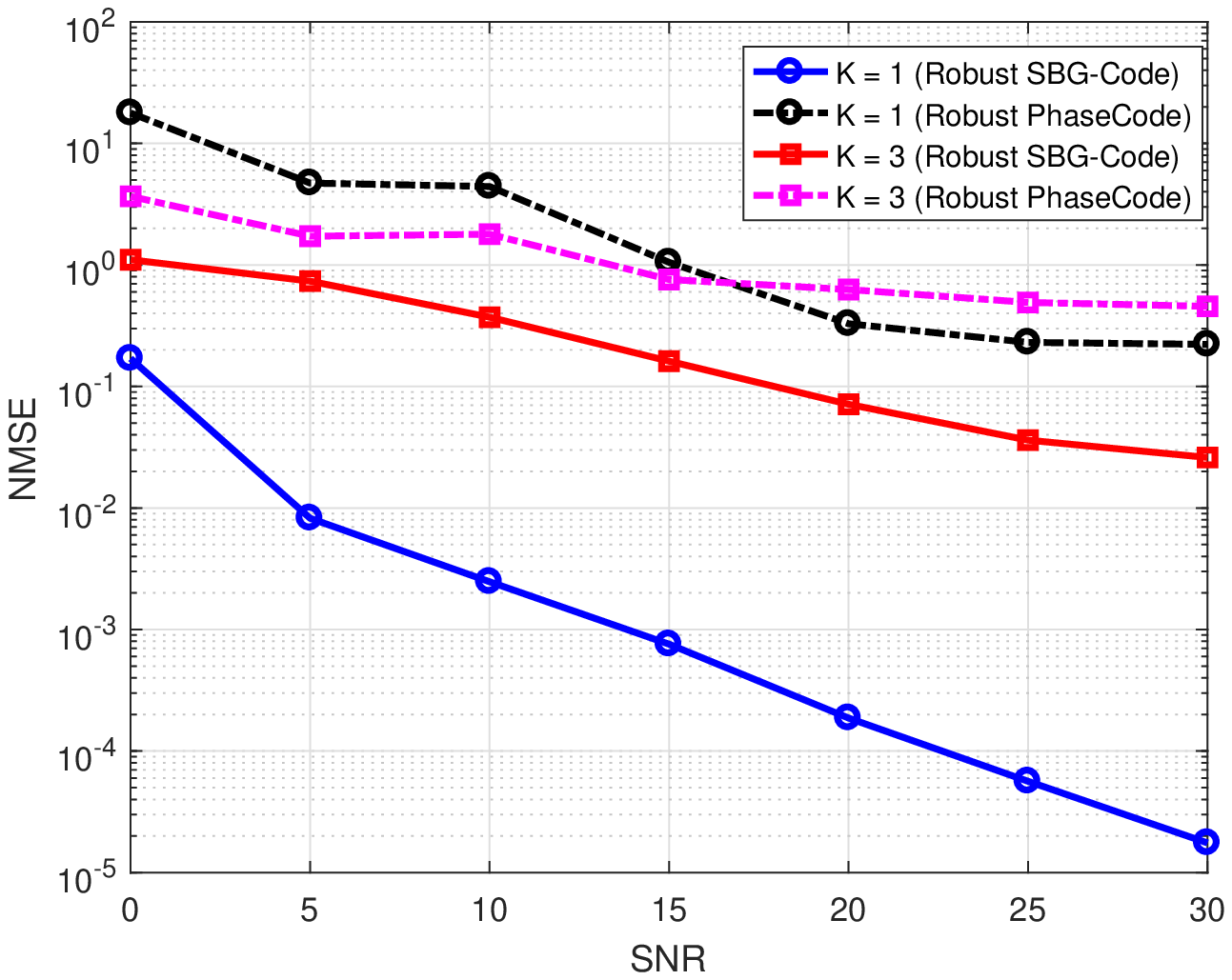}} \caption{NMSEs of
respective algorithms vs. $T$ and SNR.} \label{fig2}
\end{figure*}

\begin{figure*}[!t]
\centering \subfigure[Beamforming gains vs.
$T$.]{\includegraphics[width=3.5in]{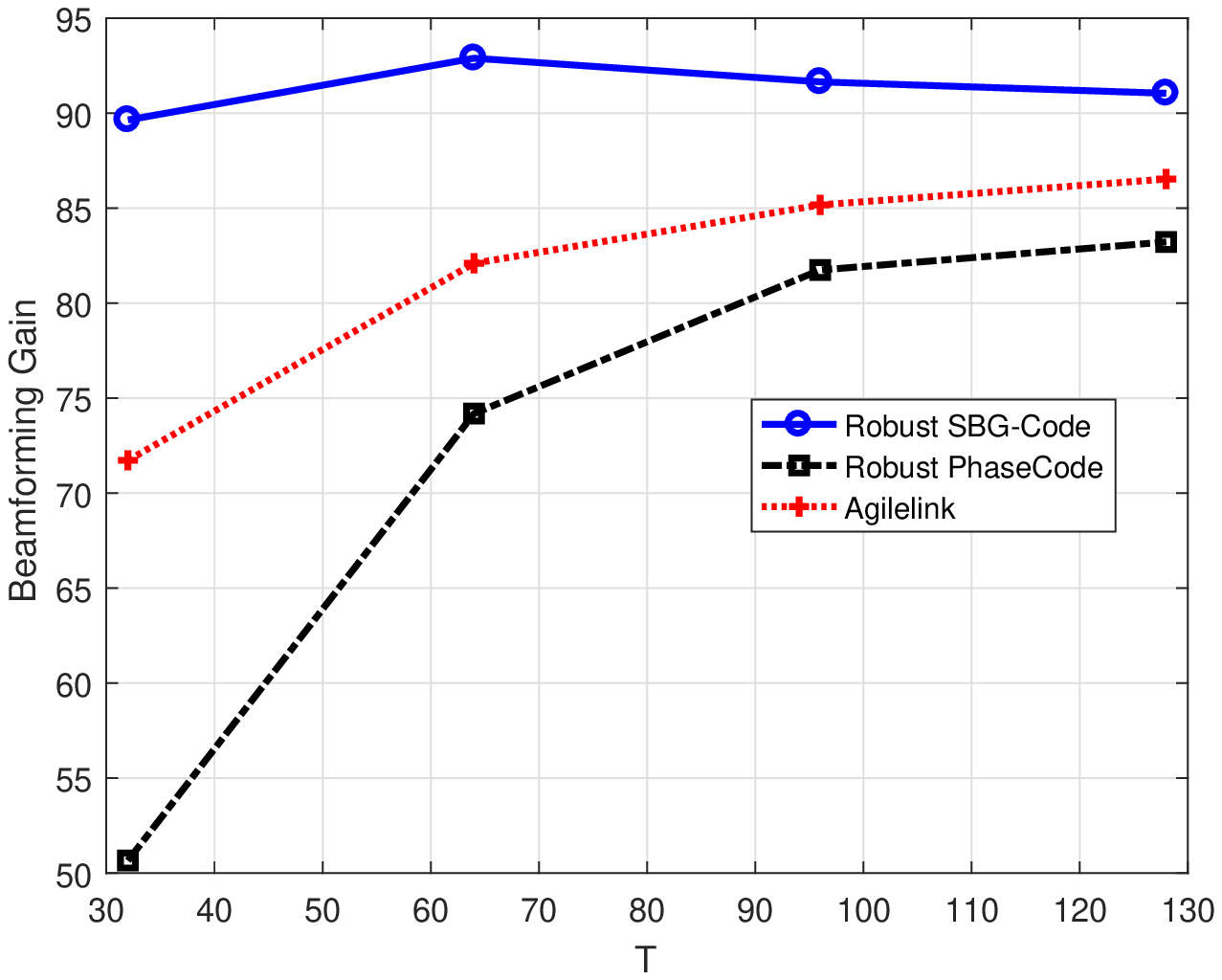}} \hfil
\subfigure[Beamforming gains vs.
SNR.]{\includegraphics[width=3.5in]{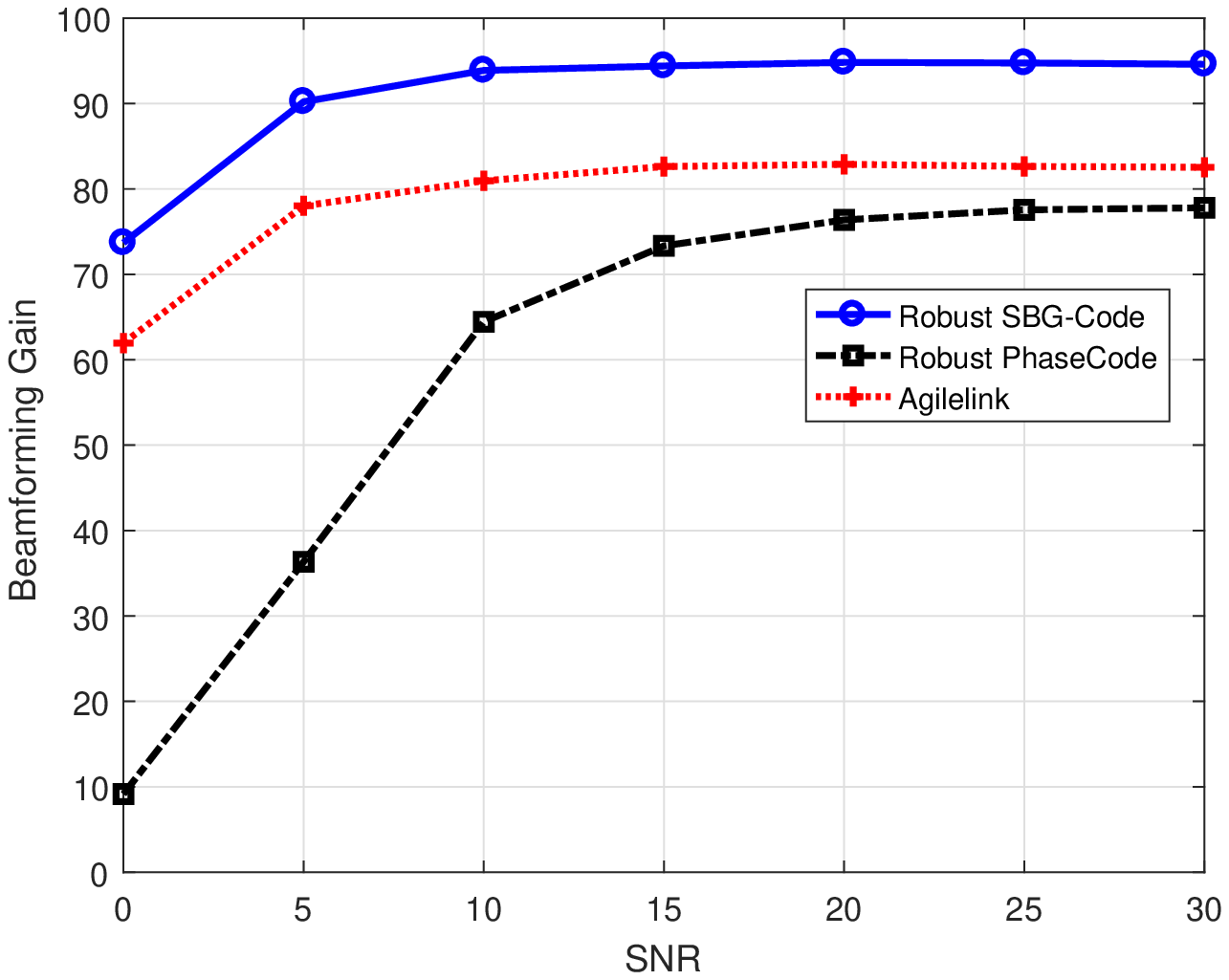}} \caption{Beamforming
gains of respective algorithms vs. $T$ and SNR.} \label{fig3}
\end{figure*}

\section{Simulation Results} \label{sec:simulation-results}
We now present simulation results to illustrate the performance of
our proposed SBG-Code algorithm. In our simulations, the
transmitter employs a ULA with $N$ antennas and $R$ RF chains,
while the receiver uses an omni-directional antenna. The distance
between neighboring antenna elements is assumed to be
$d=\lambda/2$. The mmWave channel $\boldsymbol{h}$ is assumed to
have a form of (\ref{sparse-channel}) with $K$ paths. The nonzero
components of $\boldsymbol{x}$ are assumed to be random variables
following a circularly symmetric complex Gaussian distribution
$\mathcal{CN}(0,1)$, and the locations of nonzero entries of
$\boldsymbol{x}$ are uniformly chosen at random. All the results
are averaged over $10^4$ independent runs. In each run,
$\boldsymbol{x}$ (i.e. $\boldsymbol{h}$) is randomly generated.
The linear function $f(n)=n/N$ is employed to encode the sparse
signal, i.e. the trignometric modulation matrix is given by
(\ref{modulation-matrix-2}), and the estimator (\ref{estimator2})
is used to estimate $\boldsymbol{z}$ in the noiseless case.

We first examine the estimation performance of our proposed
algorithm in the noiseless case. The performance is evaluated via
the success rate, which is computed as the ratio of the number of
successful trials to the total number of independent runs. A trial
is considered successful if
$\|\boldsymbol{\hat{z}}-\boldsymbol{z}\|_2^2/\|\boldsymbol{z}\|_2^2<10^{-8}$,
where $\boldsymbol{\hat{z}}$ denotes the estimate of
$\boldsymbol{z}$. Fig. \ref{fig1}(a) depicts the success rates as
a function of the number of measurements $T=2ML$, where the number
of antennas is set to $N=128$, the number of RF chains is set to
$R=8$, and the number of right nodes in each bipartite graph is
set to $M=16$. In the figure, solid lines represent the
theoretical performance given in (\ref{success-rate}), while the
circle marks represent the performance obtained via the Monte
Carlo experiments. From Fig. \ref{fig1}, we see that our
theoretical result matches the empirical result very well. Also,
when the number of paths $K$ is small, our proposed scheme can
perfectly recover the AoA and the attenuation (in magnitude) of
each path with a decent probability even using a small number of
measurements, say $T=32$, thus achieving a substantial overhead
reduction for beam alignment. Fig. \ref{fig1}(b) plots the success
rates as a function of the dimension of the sparse signal $N$,
where we set $T=64$, $R=8$, and $M=16$. From Fig. \ref{fig1}(b),
we observe that the success rate of our proposed algorithm remains
almost unaltered as $N$ grows. This result corroborates our
theoretical claim that our proposed algorithm has a sample
complexity independent of $N$. It is also interesting to examine
the impact of the choice of the number of right nodes per
bipartite graph, $M$, on the performance of our proposed
algorithm, given the total number of measurements $T$ fixed. Fig.
\ref{fig1}(c) plots the success rates as a function of $M$, where
we set $T=64$ and $N=128$. Note that since the parameter $M$ must
be chosen such that $R\geq \text{floor}(N/M)$, the number of
required RF chains changes as $M$ varies. From Fig. \ref{fig1}(c),
we see that the best performance is achieved when $M\approx K^2$.

Next, we illustrate the estimation performance of our proposed
algorithm in the noisy case. We compare our method with the robust
PhaseCode algorithm. As mentioned earlier in our paper, PhaseCode
uses a single randomly generated bipartite graph to encode the
sparse signal. The resulting measurement matrix $\boldsymbol{A}$
may not satisfy constraint C1. To fulfil the potential of
PhaseCode, we allow the constraint C1 to be violated by PhaseCode.
For our proposed method, the prescribed false alarm probability
used to determine the threshold in the energy detector
(\ref{energy-detector}) is set to $e^{-9/2}\approx0.011$, thus the
threshold is given by $\epsilon=3\sigma$. For a fair comparison,
the beamforming vector $\boldsymbol{b}(t)$ (cf. (\ref{eqn12}))
used in both schemes is normalized to unit norm. The performance
is evaluated via the normalized mean squared error (NMSE)
calculated as
\begin{align}
  \text{NMSE} = E\left[\frac{\|\boldsymbol{\hat{z}}-\boldsymbol{z}\|_2^2}{\|\boldsymbol{z}\|_2^2}\right]
\end{align}
Note that PhaseCode is able to retrieve the complete information
of $\boldsymbol{x}$. But the accuracy of the estimate of
$\boldsymbol{z}=|\boldsymbol{x}|$ is of most concern for beam
alignment. Fig. \ref{fig2}(a) shows the NMSEs of respective
schemes as a function of $T$, where we set $N=128$, $M=16$, and
the SNR is set to $20$dB. Here the SNR is defined as
\begin{align}
\text{SNR}=10\log(\|\boldsymbol{h}\|_2^2/(N\sigma^2))
\end{align}
From Fig. \ref{fig2}(a), we see that our proposed method
outperforms the robust PhaseCode method by a big margin for
difference choices of $K$. The performance improvement is
primarily due to the fact that our proposed method circumvents the
complicated decoding procedure that is needed for PhaseCode and
thus gains substantially improved robustness against noise. Fig.
\ref{fig2}(b) depicts the NMSEs of respective schemes as a
function of SNR, where we set $T=64$ and $M=16$. It can be
observed that our proposed method attains a decent accuracy even
in the low and moderate SNR regimes, whereas the robust PhaseCode
fails in this case.

Lastly, we compare our proposed algorithm with the Agile-Link
\cite{AbariHassanieh16}, a beam steering scheme which also relies
on the magnitude information of measurements for recovery of
signal directions. It should be noted that Alige-Link only
recovers signal directions, but not $\boldsymbol{z}$. The
beamforming gain defined below is used as a metric to evaluate the
performance of respective schemes
\begin{align}
  G_{\text{BF}} = E\left[N|\boldsymbol{a}_t^H(\hat{\theta}_{\text{opt}})\boldsymbol{h}|^2/\|\boldsymbol{h}\|_2^2\right]
\end{align}
in which $\hat{\theta}_{\text{opt}}$ denotes the estimated
direction of path that delivers the maximum energy. For the
Agile-Link, $\hat{\theta}_\text{opt}$ is estimated as the
direction with the highest probability. Fig. \ref{fig3}(a) depicts
the beamforming gains of respective algorithms as a function of
$T$, where we set $N=128$, $K=2$, $M=16$, and
$\text{SNR}=15\text{dB}$. Again, for a fair comparison, the
beamforming vector $\boldsymbol{b}(t)$ used in these schemes is
normalized to unit norm. We see that our proposed method yields a
higher beamforming gain than the Agile-Link and the PhaseCode, and
the performance gap is particularly pronounced when the number of
measurements $T$ is small. This result suggests that our proposed
method can help find a better beam alignment. Fig. \ref{fig3}(b)
plots the beamforming gains of respective algorithms as a function
of SNR, where we set $T=64$ $K=2$, and $M=32$, from which we can
see that our proposed method even renders a decent beamforming
gain in the low SNR regime.

\section{Conclusions} \label{sec:conclusions}
The problem of mmWave beam alignment was examined in this paper.
By exploiting the sparse scattering nature of mmWave channels, we
showed that the problem of beam alignment can be formulated as a
sparse encoding and phaseless decoding problem. A SBG-Code method
was developed to encode the sparse signal and retrieve the support
and magnitude information of the sparse signal from compressive
phaseless measurements. Our analysis revealed that the proposed
method can provably recover the sparse signal with a pre-specified
probability from $\mathcal{O}(K^2)$ phaseless measurements.
Simulation results showed that the proposed scheme renders a
reliable beam alignment even in a low or moderate SNR regime with
very few measurements, and presents a clear advantage over
existing mmWave beam alignment algorithms.

\useRomanappendicesfalse
\appendices

\section{Proof of Proposition \ref{proposition1}} \label{appA}
Before preceding, we first show that the probability that all
right nodes of $G_l$ are either singletons or nulltons is
maximized when each column of $\boldsymbol{H}_l$ has only one
nonzero element, i.e. each left node is connected to only one
right node. Such a fact can be easily verified via an
edge-deletion operation performed on $G_l$. Specifically, for each
left node of $G_l$, if it has more than one edge, that is, it is
connected to more than one right node, then we reserve only one
edge and delete all the other edges. It is clear that after the
edge-deletion operation, the number of singletons and nulltons of
$G_l$ either keeps increased or unchanged. Therefore, the
probability that all right nodes of $G_l$ are either singletons or
nulltons is maximized when each column of $\boldsymbol{H}_l$ has
only one nonzero element. Note than in this case, we have
\begin{align}
  \sum_{m=1}^M r_{m} = rM = N
\end{align}

%where the substraction set between sets $\mathcal{A}$ and
%$\mathcal{B}$ is defined as
%$\mathcal{A}-\mathcal{B}=\{x|x\in\mathcal{A},x\notin\mathcal{B}\}$.

We now calculate the probability that $G_l$ is an NM-graph when
each left node is connected to only one right node. More
precisely, we divide $N$ left nodes into $M$ disjoint sets, where
the $m$th set consisting of $r_{m}$ left nodes is connected to the
$m$th right node. There are $K$ active left nodes in total. We
need to calculate the probability that each set of left nodes,
denoted as $S_m$, contains at most one active left node. Define
\begin{align}
\mathcal{M}\triangleq\{1,\ldots,M\}
\end{align}
Let $\mathcal{K}\triangleq\{i_1,\ldots,i_K\}$ be a subset of
$\mathcal{M}$ consisting of $K$ elements, and
$\{i_{K+1},\ldots,i_M\}=\mathcal{M}-\mathcal{K}$ be the difference
set between $\mathcal{M}$ and $\mathcal{K}$. It can be easily
verified that the number of ways of dividing $N$ left nodes into
$M$ disjoint sets such that each set $S_m, m\in\mathcal{K}$,
contains only one active left node is given as
\begin{align}
  &K!C_{N-K}^{r_{i_1}-1}C_{N-R_1-K+1}^{r_{i_2}-1}\cdots C_{N-R_{K-1}-1}^{r_{i_K}-1}
  C_{N-R_K}^{r_{i_{K+1}}}\cdots C_{N-R_{M-1}}^{r_{i_M}} \nonumber \\
  &=\frac{K!(N-K)!}{\prod_{t=1}^K (r_{i_t}-1)!\prod_{t=K+1}^M r_{i_t}!}
\end{align}
where $R_n \triangleq \sum_{t=1}^{n}r_{i_t}$. Thus, the number of
ways of dividing $N$ left nodes into $M$ disjoint sets such that
each set contains at most one active left node is given by
\begin{align}
n_1 \triangleq \sum_{\{i_1,\ldots,i_{K}\}\subseteq\mathcal{M}}
\frac{K!(N-K)!}{\prod_{t=1}^K (r_{i_t}-1)!\prod_{t=K+1}^M
r_{i_t}!}
\end{align}
On the other hand, the total number of ways of assigning $N$ left
nodes to $M$ disjoint sets is given as
\begin{align}
  n_{2} \triangleq C_{N}^{r_{1}}C_{N-r_1}^{r_{2}}\cdots C_{N-\sum_{i=1}^{M-1}r_i}^{r_{M}} = \frac{N!}{\prod_{i=1}^M r_{i}!}
\end{align}
Therefore the probability that $G_l$ is an NM graph can be
calculated as
\begin{align}
P(\text{$G_l$ is an NM-graph}) =\frac{n_1}{n_2}
=\frac{\eta(K)}{C_N^K}
\end{align}
where
\begin{align}
\eta(K)\triangleq\sum_{\{i_1,\ldots,i_{K}\}\subseteq\mathcal{M}}\bigg(\prod_{t=1}^{K}
r_{i_t}\bigg)
\end{align}

Next, we prove
\begin{align}
\eta(K)\leq r^K C_M^K \label{inequality2}
\end{align}
holds for all $1\leq K\leq M$ when $\sum_{i=1}^M r_{i} = rM$. The
inequality (\ref{inequality2}) is proved by mathematical
induction. First, we prove the base case: $K=1$. It is easy to
verify that
\begin{align}
  \eta(1) = \sum_{i=1}^M r_{i}= rM = r C_M^1
\end{align}
We then proceed to the inductive step. Suppose the following
inequality holds for $K'-1$
\begin{align}
  \eta(K'-1) \leq r^{K'-1} C_M^{K'-1}
  \label{k'-1}
\end{align}
We need to prove
\begin{align}
  \eta(K') \leq r^{K'} C_{M}^{K'}
  \label{th1-1}
\end{align}
To this goal, we multiply both sides of (\ref{k'-1}) by
$\sum_{i=1}^M r_{i}$, which yields
\begin{align}
  \eta(K'-1) \bigg(\sum_{i=1}^M r_{i}\bigg)
  \leq& \ r^{K'-1} C_M^{K'-1} Mr \nonumber \\
  =&\ Mr^{K'} C_M^{K'-1}
  \label{th1-2}
\end{align}
The left-hand side of (\ref{th1-2}) can be further written as
\begin{align}
  &\eta(K'-1)\bigg(\sum_{i=1}^M r_{i}\bigg)   \nonumber \\
  =\ &\sum_{i=1}^M\sum_{\{i_1,\ldots,i_{K'-2}\}\subseteq\mathcal{M}-\{i\}}
  r_{i}^2r_{i_1}\cdots r_{i_{K'-2}} + K'\eta(K')  \nonumber \\
  = &\frac{1}{M-K'+1}\sum_{\{i_1,\ldots,i_{K'}\}\subseteq\mathcal{M}}
  \left(\prod_{t=1}^{K'} r_{i_t} \sum_{\substack{j,k\in\{1,\ldots,K'\}\\j\neq k}}
  \frac{r_{i_k}}{r_{i_j}}\right)\nonumber \\
  &+ K'\eta(K')
  \label{th1-3}
\end{align}
For any $\{i_1,\ldots,i_{K'}\}\subseteq\mathcal{M}$, using the
inequality of arithmetic and geometric means (also referred to as
the AM-GM inequality), we have
\begin{align}
  \sum_{\substack{j,k\in\{1,\ldots,K'\}\\j\neq k}}
  \frac{r_{i_k}}{r_{i_j}}
  \geq \ &K'(K'-1)\left(\prod_{\substack{j,k\in\{1,\ldots,K'\}\\j\neq k}}
  \frac{r_{i_k}}{r_{i_j}}\right)^{\frac{1}{K'(K'-1)}}  \nonumber \\
  = \ &K'(K'-1)
\end{align}
in which the inequality becomes an equality if and only if
$r_{i_1}=\cdots =r_{i_{K'}}$. Hence, we have
\begin{align}
  &\sum_{\{i_1,\ldots,i_{K'}\}\subseteq\mathcal{M}}\left( \prod_{t=1}^{K'} r_{i_t}
  \sum_{\substack{j,t\in\{1,\ldots,K'\}\\j\neq t}}
  \frac{r_{i_t}}{r_{i_j}}\right)
  \geq K'(K'-1)\eta(K')
  \label{th1-5}
\end{align}
in which the inequality (\ref{th1-5}) becomes equality if and only
if $r_{1}=\cdots =r_{M}=r$. Combining (\ref{th1-2}), (\ref{th1-3})
and (\ref{th1-5}), we arrive at
\begin{align}
  &Mr^{K'}C_M^{K'-1}
  \geq\eta(K'-1) \bigg(\sum_{i=1}^M r_{i}\bigg) \nonumber \\
  \geq\ &\left(\frac{K'(K'-1)}{M-K'+1}+K'\right)\eta(K')
  =\frac{MK'}{M-K'+1}\eta(K') \label{appA:eqn1}
\end{align}
From (\ref{appA:eqn1}), we have
\begin{align}
  \eta(K')
  \leq\frac{M-K'+1}{MK'}Mr^{K'} C_M^{K'-1}
  =r^{K'} C_M^{K'}
\end{align}
Thus the inductive step is proved. This completes our proof.

\section{Proof of Theorem \ref{theorem1}} \label{appB}
According to our proposed algorithm, we see that the support and
magnitude information of $\boldsymbol{x}$ can be perfectly
recovered when there is at least one NM-graph in all bipartite
graphs $\{G_l\}_{l=1}^L$. Therefore, the probability that our
proposed algorithm succeeds to recover the support and magnitude
information of $\boldsymbol{x}$ equals the probability that there
is at least one NM-graph in $\{G_l\}_{l=1}^L$, which is equivalent
to
\begin{align}
  p = 1-(1-P(\text{$G_l$ is an NM-graph}))^L = 1-(1-\lambda)^L
\end{align}

\bibliography{newbib}
\bibliographystyle{IEEEtran}

\end{document}